\newtheorem{fact}{Fact}
\begin{document}

\title{On the density of states in a free CFT and finite volume corrections}

\author{Connor Behan}
\affiliation{Department of Physics and Astronomy, University of British 
Columbia, Vancouver, BC, V6T 1W9, Canada}
\date{\today}

\begin{abstract}
Results from spectral geometry such as Weyl's formula can be used to relate the
thermodynamic properties of a free massless field to the spatial manifold on which
it is defined. We begin by calculating the free energy in two cases: manifolds
posessing a boundary and spheres. The subextensive contributions allow us to test
the Cardy-Verlinde formula and offer a new perspective on why it only holds in a
free theory if one allows for a change in the overall coefficient. After this
we derive an expression for the density of states that takes the form of a Taylor
series. This series leads to an improvement over known results when the
area of the manifold's boundary is nonzero but much less than the appropriate
power of its volume.
\end{abstract}

\newpage
\maketitle
\newpage

\section{Introduction}
The density of states for a quantum mechanical system is defined as a measure
$\rho(E)\textup{d}E$ giving the number of Hilbert space states with energy in
$[E, E + \textup{d}E)$. Strictly speaking it only exists for theories that have
a continuous spectrum, but as long as the energy times the length scale is large,
it becomes a convenient tool for counting states in a discrete spectrum as well.
In a free field theory, it is well known that having a continuous spectrum is
equivalent to the statement that the theory lives in a manifold $M$ that has an
infinite volume. The present work considers finite volume $|M|$ and is also
primarily concerned with spatial dimension $d > 1$.

Thermodynamic quantities in a $(1 + 1)$-dimensional conformal field theory are
well understood because the partition function in such a theory is invariant under
modular transformations. In 1986, Cardy used this to derive a formula for the
entropy \cite{cardy}
\begin{equation}
S = 2\pi \sqrt{\frac{c + \tilde{c}}{6} \left ( E - \frac{c + \tilde{c}}{24} 
\right )} ,\label{cardy-formula}
\end{equation}
where the volume is normalized to $2\pi$ by convention.
Letting the two central charges be equal and dropping terms with $c^2$ shows us
that the density of states has the following asymptotic expression:\footnote{
For an explanation of the notation, we will use $\propto$ in
equations that are roughly true to illustrate a conceptual point. For example
$f(x) \propto g(x)$ means that the leading term of $f$ is proportional to the
leading term of $g$. A symbol we use more rigorously is $\sim$. We say that
$f(x) \sim g(x)$ if $f(x) = g(x) + o(g(x))$. In other words, the leading terms
are equal and as we let $x \rightarrow \infty$,
$f - g$ becomes negligible compared to $f$ and $g$.}
\begin{equation}
\rho(E) \propto e^{2 \pi \sqrt{\frac{cE}{3}}} = e^{\sqrt{\frac{2c}{3} \pi |M| E}}
.\nonumber
\end{equation}
Efforts to extend the Cardy formula have been fruitful and have invoked powerful
theorems such as the Rademacher expansion for the Fourier coefficients of a
modular form \cite{farey}. In 2011, Loran, Sheikh-Jabbari and Vincon
proceeded along these lines to show that
\begin{eqnarray}
\rho(E) &\sim& \frac{2\pi^2 c}{3} 
\frac{I_1\left ( 2\pi\sqrt{cE / 3} \right )}{2 \pi \sqrt{cE / 3}} \nonumber \\
&=& \frac{\pi c |M|}{3} 
\frac{I_1\left ( \sqrt{2\pi c |M|E / 3} \right )}{\sqrt{2\pi c |M|E / 3}}
\label{bessel-function}
\end{eqnarray}
up to exponentially suppressed contributions \cite{loran}. Here $I_1$ is the
modified Bessel function of the first kind.

For the most part, powerful techniques based on modular invariance are no longer
available in higher dimensions. There are two main results for arbitrary dimension
and each makes different assumptions about the interactions in the theory.
One is the Cardy-Verlinde formula \cite{verlinde2} which has been used to describe
strongly coupled theories on a $d$-sphere with gravity duals
\cite{klemm, cai2, lin, brustein}. It states that
\begin{equation}
S = \frac{2\pi r}{d} \sqrt{E_{\textup{C}}(2E-E_{\textup{C}})} ,\nonumber
\end{equation}
where $E_{\textup{C}}$ is the Casimir energy. Since the Casimir energy of a CFT
compactified on a circle of radius $r$ is $\frac{c + \tilde{c}}{12r}$, this is in
perfect analogy with the Cardy formula
\begin{equation}
S = 2\pi r \sqrt{\frac{c + \tilde{c}}{12r}
\left ( 2E-\frac{c + \tilde{c}}{12r} \right )} ,\nonumber
\end{equation}
which is essentially (\ref{cardy-formula}). The other is the standard result for a
free theory where the partition function and free energy are given by
\begin{eqnarray}
Z &=& e^{\frac{A}{\beta^d}} \nonumber \\
F &=& -AT^{d + 1} \label{free-free-energy}
\end{eqnarray}
respectively. In a theory with $s$ species of bosons and $s^*$ species of
fermions, $A$ is given by
\begin{equation}
A = \frac{d! \omega_d [s\zeta(d + 1) + s^{*}\zeta^{*}(d + 1)] |M|}{(2\pi)^d}
,\nonumber
\end{equation}
where $\zeta(\sigma) = \sum_{n=1}^{\infty} \frac{1}{n^{\sigma}}$ is the Riemann
zeta function,
$\zeta^*(\sigma) = \sum_{n=1}^{\infty} \frac{(-1)^{n-1}}{n^{\sigma}}$ is the
alternating zeta function and $\omega_d$ is the volume of a unit ball in
$\mathbb{R}^d$.\footnote{Even though it is not usually stated in such a general
form, this is a widely published result. One example is the maximally
supersymmetric large $N$ Yang-Mills theory for which $d = 3$ and
$s = s^{*} = 8N^2$. In the free theory, $F = \frac{\pi^2 N^2 |M|}{6} T^4$
appears frequently in the literature \cite{gubser, kim, fotopoulos, cai1}. Another
example is QCD for which $d = 3$, $s = 16$ and $s^{*} = 12 n_{\textup{f}}$.
In the asymptotically free limit, \cite{braaten} has found that
$F = -\frac{8 \pi^2 |M|}{45} T^4 \left (1 + \frac{21}{32} n_{\textup{f}} \right )$
which also agrees with (\ref{free-free-energy}).}

An expression for $\rho(E)$ can be calculated from (\ref{free-free-energy})
because the density of states is the inverse Laplace transform of the partition
function:
\begin{equation}
\rho(E) = \frac{1}{2\pi} \int_{-\infty}^{\infty} Z(i\beta) e^{i\beta E}
\textup{d}\beta .\nonumber
\end{equation}
Integrals of the form $\int_{-\infty}^{\infty} e^{f(i\beta)} \textup{d}\beta$ can
be approximated as $e^{f(i\beta_0)}$ where $i\beta_0$ is a saddle point of $f$.
However, if we want our expression to have an inverse energy prefactor as is
required for a proper density of states, we must consider quadratic fluctuations
around the saddle point and evaluate a Gaussian integral. The density of states
becomes
\begin{eqnarray}
\rho(E) &\sim& \frac{1}{2\pi} \int_{-\infty}^{\infty}
e^{f(i\beta_0) - \frac{\beta^2}{2} f^{\prime \prime}(i\beta_0)} \textup{d}\beta
\nonumber \\
&=& \frac{1}{\sqrt{2\pi f^{\prime \prime}(i\beta_0)}} e^{f(i\beta_0)} .\nonumber
\end{eqnarray}
Since $f(i\beta) = i\beta E + \frac{A}{(i\beta)^d}$, the saddle point can be found
as $i\beta_0 = \left ( \frac{dA}{E} \right )^{\frac{1}{d+1}}$ giving
\begin{equation}
\rho(E) \sim \frac{1}{\sqrt{2\pi}} \sqrt{\frac{d^{\frac{1}{d+1}}}{d+1}}
A^{\frac{1}{2(d+1)}} E^{\frac{-d-2}{2(d+1)}}
\exp \left( \frac{d+1}{d^{\frac{d}{d+1}}} A^{\frac{1}{d+1}} E^{\frac{d}{d+1}}
\right) \label{2terms-2approximations}
\end{equation}
for the density of states.\footnote{The fact that $\rho(E)$ must grow
exponentially as $E^{\frac{d}{d+1}}$ is already clear from dimensional
analysis. Energy and entropy are both extensive so they must be proportional
to the volume. This means that $E \propto |M|T^{d+1}$ and $S \propto |M|T^d$
because in a CFT, the only other scale is the one set by temperature. Substituting
these into eachother gives $S \propto V^{\frac{1}{d+1}} E^{\frac{d}{d+1}}$.}

Two approximations are being made in this established scheme which we will call
the saddle point approximation and the continuum approximation. The saddle point
approximation is the truncation of the exponent in the integral and it is useful
because the integral has no closed form expression. The continuum approximation
is the assumption that we may replace our sums with integrals and use the
partition function for a theory that has a continuous spectrum. The goal of this
work is to improve upon each of these approximations separately. In the first
section, we analyze the spectrum of the Laplacian for three different field
configurations. The family of results we derive goes beyond the continuum
approximation by providing corrections to the fact that the energy times
the length scale is large. They are:
\begin{eqnarray}
Z &=& e^{\frac{A}{\beta^d} + \frac{B}{\beta^{d-1}}} \nonumber \\
F &=& -AT^{d + 1} - BT^d \nonumber \\
B &=& \frac{(d-1)! \omega_{d-1} [s\zeta(d) + s^{*}\zeta^{*}(d)] |\partial M|}
{4(2\pi)^{d-1}} ,\label{main-result1}
\end{eqnarray}
for minimally coupled fields on a general manifold with boundary $\partial M$,
\begin{eqnarray}
Z &=& e^{\frac{A}{\beta^d} + \frac{C}{\beta^{d-2}}} \nonumber \\
F &=& -AT^{d + 1} - CT^{d-1} \nonumber \\
C &=& \frac{d}{6} [s\zeta(d-1) + s^{*}\zeta^{*}(d-1)] r^{d-2}
,\label{main-result2}
\end{eqnarray}
for minimally coupled fields on a sphere and
\begin{eqnarray}
Z &=& e^{\frac{A}{\beta^d} + \frac{C^{\prime}}{\beta^{d-2}}} \nonumber \\
F &=& -AT^{d + 1} - C^{\prime}T^{d-1} \nonumber \\
C^{\prime} &=& \frac{-(d-3)^2}{4} [s\zeta(d-1) + s^{*}\zeta^{*}(d-1)] r^{d-2}
,\label{main-result3}
\end{eqnarray}
for conformally coupled fields on a sphere. In the second section, we show how
the finite size effects in our results provide an alternative route to known
results about the Cardy-Verlinde formula. The third section returns to the task
of going beyond the continuum approximation and discusses the difficulty in
converting the corrections for $Z(\beta)$ in
(\ref{main-result1}, \ref{main-result2}, \ref{main-result3}) into corrections for
$\rho(E)$. For the last section, we turn to the saddle point approximation. We go
beyond it by computing the inverse Laplace transform of $e^{A/\beta^d}$ as an
exact Taylor series. From this we arrive at
\begin{equation}
\rho(E) \sim \frac{1}{\sqrt{2\pi}} \sqrt{\frac{d^{\frac{1}{d+1}}}{d+1}}
A^{\frac{1}{2(d+1)}} E^{\frac{-d-2}{2(d+1)}}
\exp \left( \frac{d+1}{d^{\frac{d}{d+1}}} A^{\frac{1}{d+1}} E^{\frac{d}{d+1}}
- \frac{(d+2)(2d+1)}{24(d+1)} (dA)^{-\frac{1}{d+1}}
E^{-\frac{d}{d+1}} \right) ,\label{main-result4}
\end{equation}
extending (\ref{2terms-2approximations}). Although it is assumed that $B = 0$
when deriving this, it is possible for the correction in the exponent
to still be significant on a manifold with a boundary. Conversely, we will show
that it cannot be trusted on a sphere; the assumption $C = 0$ necessarily sets
the correction in the exponent to zero as well.

\section{Beyond the continuum approximation I}
A field theory can be regarded as a collection of degrees of freedom
labelled by a momentum $\textbf{p}$ and possibly other quantum numbers
labelling the spin and/or charge. When a particular degree of freedom
having squared momentum $p^2$ is excited, the energy added to the system is
$\sqrt{p^2 + m^2}$ by the relativistic dispersion relation. In a bosonic
theory, this excitation can be repeated infinitely many times, so if we
restrict our attention to this degree
of freedom, the single mode partition function is
\begin{eqnarray}
Z(p^2) &=& 1 + e^{-\beta \sqrt{p^2 + m^2}} + e^{-2\beta \sqrt{p^2 + m^2}} + 
\dots \nonumber \\
&=& \left ( 1 - e^{-\beta \sqrt{p^2 + m^2}} \right )^{-1} 
.\label{bosonic-first-step}
\end{eqnarray}
How many degrees of freedom have a squared momentum of $p^2$?
This question depends on the manifold $M$. Allowed momenta are eigenvalues of
the Laplace-Beltrami operator. This is a map between the two function spaces
$\Delta: C^2_0(M) \rightarrow C^0(M)$ where the
subscript $0$ indicates Dirichlet boundary conditions.
The eigenvalue equation is
\begin{equation}
\Delta f = -\lambda f \nonumber
\end{equation}
where $\lambda = p^2$. Understanding the spectrum of this operator is related
to Kac's problem; ``Can you hear the shape of a drum?'' Weyl's formula - one of
the earliest positive answers to this question - will be immensely useful
to us. The formula found by Weyl states that $\#(\lambda)$, the number of
Dirichlet eigenvalues up to $\lambda$ (for large $\lambda$) is given by:
\begin{equation}
\#(\lambda) = (2\pi)^{-d} \omega_d |M| \lambda^{\frac{d}{2}} + 
O \left( \lambda^{\frac{d - 1}{2}} \right ) ,\label{weyl}
\end{equation}
where $|M|$ is the volume of the manifold and $\omega_d$ is the volume of a unit
ball in $\mathbb{R}^d$ \cite{ivrii}. By differentiating this quantity, we can
determine $g(\lambda)$ - the number of eigenvalues between $\lambda$ and
$\lambda + \textup{d}\lambda$.
While $g(\lambda)$ gives us a degeneracy\footnote{Although $g(\lambda)$ gives the
degeneracy of the eigenvalue $\lambda$ and $\lambda$ is a way of labelling energy,
$g(\lambda)$ is \textit{not} the same as $\rho(E)$. In single particle problems,
Weyl's formula is sometimes referred to as a formula for the density of states
\cite{gutzwiller}, but in this problem, we are interested in the density of states
for the entire Fock space. We will therefore use the term \textit{degeneracy} for
$g(\lambda)$ and reserve the term \textit{density of states} for $\rho(E)$.} for
the eigenvalue $\lambda$ coming from geometrical considerations, we know that a
field theory can introduce further degeneracy as part of its description.
Therefore, if $s$ is the number of internal states, there are really
$s g(\lambda)$ ways of exciting a field to an eigenstate that has a squared
momentum of $\lambda$. These excitations contribute a factor of
\begin{equation}
Z(\lambda)^{s g(\lambda)} = 
\left ( 1 - e^{-\beta \sqrt{\lambda + m^2}} \right )^{-s g(\lambda)} \nonumber
\end{equation}
to the partition function. The full partition function we seek is the product
of $Z(\lambda)$ over all eigenvalues $\lambda$. Taking the log of a product
turns it into a sum, so
$\log{Z} = \sum_{\lambda} s g(\lambda) \log{Z(\lambda)}$. In the large volume
limit, the spectrum of the Laplacian becomes continuous, meaning that our sum
over $\lambda$ becomes an integral:
\begin{equation}
\log{Z} = \int_0^{\infty} s g(\lambda) \log{Z(\lambda)} \textup{d}\lambda 
.\nonumber
\end{equation}
If we proceeded to use (\ref{weyl}) in this calculation, we would derive the
common result (\ref{free-free-energy}). With this $g(\lambda)$, it is clear that
our answer would depend on the manifold only through its volume. This is to be
expected because at high enough temperatures, a field theory is insensitive to the
details of topology. For this reason, the calculation of (\ref{free-free-energy})
is usually done in flat space. Weyl's formula (\ref{weyl}) is simply a way
to justify this convenient choice. However, there is a correction to Weyl's
formula found by Ivrii in 1980 \cite{ivrii}:
\begin{equation}
\#(\lambda) = (2\pi)^{-d} \omega_d |M| \lambda^{\frac{d}{2}} + 
\frac{1}{4} (2\pi)^{-(d-1)} \omega_{d-1} |\partial M| \lambda^{\frac{d - 1}{2}}
+ o\left( \lambda^{\frac{d-1}{2}} \right ) .\label{weyl-ivrii}
\end{equation}
The second term, proportional to the boundary area, becomes important when the
length scale is not infinitely large and therefore contains information about
the lower temperature thermodynamics as well. Using this term,
\begin{eqnarray}
g(\lambda) &=& \frac{\textup{d}\#}{\textup{d}\lambda} \nonumber \\
&\sim& \frac{d}{2} (2\pi)^{-d} \omega_d |M| \lambda^{\frac{d-2}{2}}
+ \frac{d-1}{8} (2\pi)^{-(d-1)} \omega_{d-1} |\partial M| \lambda^{\frac{d-3}{2}}
.\nonumber
\end{eqnarray}
Now we can continue calculating the partition function.
\begin{eqnarray}
\log{Z} &=& \int_0^{\infty} s g(\lambda) \log{Z(\lambda)} \textup{d}\lambda 
\nonumber \\
&=& -\int_0^{\infty} \left ( \frac{sd \omega_d |M|}{2 (2\pi)^d}
\lambda^{\frac{d-2}{2}} + \frac{s(d-1) \omega_{d-1} |\partial M|}{8 (2\pi)^{d-1}}
\lambda^{\frac{d-3}{2}} \right )
\log\left ( 1 - e^{-\beta \sqrt{\lambda + m^2}} \right ) 
\textup{d}\lambda \nonumber \\
&=& \sum_{n = 1}^{\infty} 
\int_0^{\infty} \left ( \frac{sd \omega_d |M|}{2 (2\pi)^d}
\lambda^{\frac{d-2}{2}} + \frac{s(d-1) \omega_{d-1} |\partial M|}{8 (2\pi)^{d-1}}
\lambda^{\frac{d-3}{2}} \right ) \frac{1}{n} e^{-n \beta \sqrt{\lambda + m^2}} 
\textup{d}\lambda \nonumber \\
&=& \sum_{n = 1}^{\infty} 
\int_0^{\infty} \left ( \frac{sd \omega_d |M|}{(2\pi)^d}
p^{d - 1} + \frac{s(d-1) \omega_{d-1} |\partial M|}{4 (2\pi)^{d-1}}
p^{d - 2} \right ) \frac{1}{n} e^{-n \beta \sqrt{p^2 + m^2}} 
\textup{d}p \nonumber
\end{eqnarray}
In the second last step, we have Taylor expanded the logarithm and in the last
step, we have used the fact that $\lambda = p^2$. To proceed further, we must
make the field theory (globally) conformal by setting the mass to zero. This
allows us to use the identity
$\int_0^{\infty} e^{-cx} x^{d - 1} \textup{d}x = \frac{(d - 1)!}{c^d}$.
\begin{eqnarray}
\log{Z} &=& \frac{sd! \omega_d |M|}{(2\pi\beta)^d}
\sum_{n = 1}^{\infty} \frac{1}{n^{d + 1}} +
\frac{s(d-1)! \omega_{d-1} |\partial M|}{4 (2\pi\beta)^{d-1}}
\sum_{n = 1}^{\infty} \frac{1}{n^{d}} \nonumber \\
&=& \frac{sd! \omega_d |M| \zeta(d+1)}{(2\pi\beta)^d}
+ \frac{s(d-1)! \omega_{d-1} |\partial M| \zeta(d)}{4 (2\pi\beta)^{d-1}}
\label{bosonic-last-step}
\end{eqnarray}
The analysis so far has been applied to a bosonic theory, but very little
changes when applying it to fermions. Each mode can have 0 or 1 excitations
so instead of (\ref{bosonic-first-step}), we have:
\begin{equation}
Z(\lambda) = 1 + e^{-\beta \sqrt{\lambda}} .\label{fermionic-first-step}
\end{equation}
Proceeding to calculate (\ref{bosonic-last-step}) in the same way, we get:
\begin{eqnarray}
\log{Z} &=& \int_0^{\infty} s^{*} g(\lambda) \log{Z(\lambda)} 
\textup{d}\lambda \nonumber \\
&=& \int_0^{\infty} \left ( \frac{s^{*}d \omega_d |M|}{2 (2\pi)^d}
\lambda^{\frac{d-2}{2}}+\frac{s^{*}(d-1) \omega_{d-1}|\partial M|}{8 (2\pi)^{d-1}}
\lambda^{\frac{d-3}{2}} \right ) \log\left( 1 + e^{-\beta \sqrt{\lambda}} \right)
\textup{d}\lambda \nonumber \\
&=& \sum_{n = 1}^{\infty} \int_0^{\infty}
\left ( \frac{s^{*}d \omega_d |M|}{(2\pi)^d} p^{d-1} +
\frac{s^{*}(d-1) \omega_{d-1} |\partial M|}{4 (2\pi)^{d-1}}
p^{d-2} \right ) \frac{(-1)^{n + 1}}{n} e^{-n \beta p} 
\textup{d}p \nonumber \\
&=& \frac{s^{*}d! \omega_d |M| \zeta^{*}(d+1)}{(2\pi\beta)^d}
+ \frac{s^{*}(d-1)! \omega_{d-1} |\partial M| \zeta^{*}(d)}{4 (2\pi\beta)^{d-1}}
,\label{fermionic-last-step}
\end{eqnarray}
where $\zeta^{*}(\sigma) = \left ( 1 - 2^{1 - \sigma} \right ) \zeta(\sigma)$
is the alternating zeta function.

By adding (\ref{bosonic-last-step}) and (\ref{fermionic-last-step}) together, we
have shown the first of our results (\ref{main-result1}) - a free energy with an
extensive term and a subextensive term. The extensive free energy, which comes
from Weyl's original formula, is that which would be derived for a continuous
spectrum. The subextensive term, which comes from Ivrii's correction, extends the
validity to lower energies where one might start to notice the discrete spectrum.
The fact that this term is proportional to the area
of the manifold's boundary, introduces a problem in the common case of a
boundaryless manifold. If $|\partial M| = 0$, any subextensive contributions would
have to come from a third term in Weyl's formula. Such a term is not known so
instead of repeating our calculation of the partition function for a general
boundaryless manifold we will specialize to the case of a $d$-sphere where the
Laplace eigenvalue problem has been solved.
The $d$-dimensional spherical harmonics obey the eigenvalue equation
\begin{equation}
r^2 \Delta Y_{l_1, \dots, l_d} = -l_d (l_d + d - 1) Y_{l_1, \dots, l_d} 
\label{spherical-harmonics}
\end{equation}
and have integer indices that satisfy
$0 \leq |l_1| \leq l_2 \leq \dots \leq l_d$ \cite{higuchi}. If the condition
were $l_1 \leq \dots \leq l_d$, we could use the formula for $l_d$-simplex
numbers to solve for the degeneracy of each
eigenvalue as $\binom{l_d + d - 1}{d - 1}$. However, $l_1$ is allowed to be
negative. Multiplying by $2$ and subtracting
the number of $l_2 \leq \dots \leq l_d$ choices so as not to double count
$l_1 = 0$, we find a degeneracy given by:
\begin{equation}
g(l_d) = 2\binom{l_d + d - 1}{d - 1} - \binom{l_d + d - 2}{d - 2} = 
\frac{2l_d + d - 1}{d - 1} \binom{l_d + d - 2}{d - 2} .\label{sphere-g}
\end{equation}
Using the eigenvalue equation and the degeneracy, we may write:
\begin{eqnarray}
\frac{2l_d + d - 1}{d - 1} \binom{l_d + d - 2}{d - 2} \textup{d}l_d &=& 
\frac{\sqrt{(d-1)^2 + 4r^2p^2}}{d - 1} \binom{l_d + d - 2}{d - 2} \textup{d}l_d
\nonumber \\
&=& \frac{\sqrt{(d-1)^2 + 4r^2p^2}}{d - 1} \binom{l_d + d - 2}{d - 2} 
\frac{2r^2p}{\sqrt{(d-1)^2 + 4r^2p^2}} \textup{d}p \nonumber \\
&=& \frac{2r^2p}{d - 1} \binom{l_d + d - 2}{d - 2} \textup{d}p .\nonumber
\end{eqnarray}
Carrying out the same procedure as before, we have:
\begin{eqnarray}
\log Z &=& -\frac{2s}{d - 1} \int_0^{\infty} r^2 p \binom{l_d + d - 2}{d - 2} 
\log \left ( 1 - e^{-\beta p} \right ) \textup{d}p \nonumber \\
&=& -\frac{2s}{(d - 1)!} \int_0^{\infty} r^2 (l_d + d - 2) (l_d + d - 3) 
\dots (l_d + 1) \log \left ( 1 - e^{-\beta p} \right ) p \textup{d}p 
\nonumber \\
&=& -\frac{2s}{(d - 1)!} \int_0^{\infty} \frac{r^2}{2^{d - 2}} 
(\sqrt{4r^2p^2 + (d - 1)^2} + d - 3) (\sqrt{4r^2p^2 + (d - 1)^2} + d - 5) 
\nonumber \\
&& \dots (\sqrt{4r^2p^2 + (d - 1)^2} + 3 - d) 
\log \left ( 1 - e^{-\beta p} \right ) p \textup{d}p .\nonumber
\end{eqnarray}
Notice that the product in the integrand is already factored as a difference 
of squares. If $d = 2$, it is the empty product. If $d$ is an even number
greater than 2, it is a product of $(4r^2p^2 + (d-1)^2 - (d-3)^2)$,
$(4r^2p^2 + (d-1)^2 - (d-5)^2)$, \textit{etc}. If $d$ is odd, there is an
extra $\sqrt{4r^2p^2 + (d-1)^2}$ left over which we Taylor expand as
$2rp + \frac{(d-1)^2}{4rp}$. Collecting all the highest powers of $r$ in
the integrand, we get $r^d p^{d - 1}$ - the term proportional to the volume.
We would now like to find the term with the next highest power of $r$.
In the even case it is
\begin{equation}
\frac{1}{4} r^{d-2} p^{d-3} \left [ \left ( \frac{d}{2} - 1 \right ) (d - 1)^2 
- \sum_{k = 0}^{\frac{d-4}{2}} (2k + 1)^2 \right ] = \frac{1}{12} 
r^{d-2} p^{d-3} ( d^3 - 3d^2 + 2d ) \nonumber
\end{equation}
and in the odd case, it is
\begin{equation}
\frac{1}{4} r^{d-2} p^{d-3} \left [ \left ( \frac{d}{2} - 1 \right ) 
(d - 1)^2 - \sum_{k = 0}^{\frac{d-3}{2}} (2k)^2 \right ] = 
\frac{1}{12} r^{d-2} p^{d-3} ( d^3 - 3d^2 + 2d ) .\nonumber
\end{equation}
Unsurprisingly, for even and odd dimension, the same polynomial in $d$ appears
in the degeneracy. Moreover, it factors as $d(d-1)(d-2)$. We can therefore
evaluate
\begin{eqnarray}
\log Z &=& \frac{-2s}{(d-1)!} \int_0^{\infty}
\left ( r^d p^{d-1} + \frac{1}{12}d(d-1)(d-2)r^{d-2}p^{d-3} \right )
\log \left ( 1 - e^{-\beta p} \right ) \textup{d}p \nonumber \\
&=& \frac{2 r^d s\zeta(d+1)}{\beta^d} + 
\frac{d r^{d-2} s\zeta(d-1)}{6\beta^{d-2}} .\nonumber
\end{eqnarray}
If we were to derive this again for fermions, $s$ would become $s^{*}$
and $\zeta$ would become $\zeta^{*}$. We have therefore derived the second of
our results (\ref{main-result2}). This again has a free energy with an
extensive (contiunuum approximation) term and a subextensive (beyond the
continuum approximation) term, but now the subextensive term is proportional
to a power of the sphere's radius, which does not vanish.

\section{The Cardy-Verlinde formula}
A generalization of the Cardy formula was proposed in 2000 by Erik Verlinde
\cite{verlinde2}. The main work investigating its use in a free theory is a 2001
paper by Kutasov and Larsen \cite{kutasov}. They computed partition functions
for a number of free theories on $\mathbb{S}^3$ and $\mathbb{S}^5$ and showed that
Cardy-Verlinde does not hold for them. Specifically they demonstrated that it
predicts the correct scaling, but with a different coefficient.
In addition to commenting on their 3 and 5-dimensional results, we will write the
analogous statement for a general $d$. Before specifying
coefficients, Verlinde introduced the formula as:
\begin{equation}
S = \frac{2\pi r}{\sqrt{ab}} \sqrt{E_{\textup{C}}(2E - E_{\textup{C}})}
,\label{original-cardy-verlinde}
\end{equation}
where $a$ and $b$ are dimensionless constants. This can be seen if we split the
energy into extensive $E_{\textup{E}}$ and subextensive $E_{\textup{C}}$ parts
and apply the following relations:
\begin{eqnarray}
E &=& E_{\textup{E}} + \frac{1}{2}E_{\textup{C}} \nonumber \\
E_{\textup{E}} &=& \frac{a}{4\pi r} S^{1+\frac{1}{d}} \nonumber \\
E_{\textup{C}} &=& \frac{b}{2\pi r} S^{1-\frac{1}{d}} .\label{casimir-relations}
\end{eqnarray}
The power law relating the entropy to the extensive energy is clear.
$S \propto T^d$ and $E_{\textup{E}} \propto T^{d+1}$. Therefore the interesting
part of (\ref{casimir-relations}) is the statement the $E_{\textup{C}}$ should be
proportional to $T^{d-1}$.\footnote{In any number of dimensions, the vacuum energy
between two parallel plates is nonzero at zero temperature \cite{brevik}.
Verlinde's definition of $E_{\textup{C}}$ however only has this property when
$d = 1$.} Since the Casimir energy is proportional to the
subextensive free energy, our results (\ref{main-result1}, \ref{main-result2})
show that (\ref{casimir-relations}) is not
satisfied on a manifold that has a boundary. We can interpret the results of
Kutasov and Larsen as showing the $E_{\textup{C}} \propto S^{1-\frac{1}{d}}$
scaling precisely because they used a boundaryless manifold for their geometry.

To comment on their results further, we should calculate $E_{\textup{C}}$ on a
sphere for arbitrary values of $d$. The Casimir energy can be written
\begin{equation}
E_{\textup{C}} = dF + E = (d+1)F + TS \nonumber
\end{equation}
as it is the deviation of the energy from the Euler identity. If we tried this
with our result (\ref{main-result2}) for minimally coupled fields, we would find
$E_{\textup{C}} = -2CT^{d-1}$. This is a problem because it is negative for $d>1$
and (\ref{original-cardy-verlinde}) makes no sense for a negative Casimir energy.
The reason why the result of \cite{kutasov} is non-trivial is because the Casimir
energy is positive for \textit{conformally} coupled fields. The equation of
motion for a scalar field coupled to the background curvature $R$ is
$\Delta \phi + \xi R \phi = 0$. Spheres have constant curvature so this does not
make the analysis harder. Using $R = \frac{d(d-1)}{r^2}$ for spheres and
$\xi = \frac{d-1}{4d}$ for conformal coupling, squared momenta are no longer
eigenvalues of $-\Delta$ but of
\begin{equation}
-\Delta + \frac{(d-1)^2}{4r^2} .\nonumber
\end{equation}
Also,
\begin{equation}
p^2 = \frac{l_d (l_d + d - 2)}{r^2} + \frac{(d-2)^2}{4r^2} =
\frac{\left ( l_d + \frac{d-1}{2} \right )^2}{r^2} \nonumber
\end{equation}
from (\ref{spherical-harmonics}). The eigenvalue degeneracy $g(l_d)$ is still
(\ref{sphere-g}). After making the substitution $l = l_d + \frac{d-1}{2}$,
our partition function is given by
\begin{eqnarray}
\log Z &=& \frac{2s}{d-1} \int_0^{\infty} l \binom{l + \frac{d-3}{2}}{d-2}
\log \left ( 1 - e^{-\beta l / r} \right ) \textup{d}l \nonumber \\
&=& \frac{2s}{d-1} \int_0^{\infty} l \left ( l + \frac{d-3}{2} \right )
\left ( l + \frac{d-5}{2} \right ) \dots \left ( l - \frac{d-3}{2} \right )
\log \left ( 1 - e^{-\beta l / r} \right ) \textup{d}l \nonumber \\
&=& \frac{2s}{d-1} \int_0^{\infty} l \left (l^2 - \frac{(d-3)^2}{4} \right )
\left (l^2 - \frac{(d-5)^2}{4} \right ) \dots
\log \left ( 1 - e^{-\beta l / r} \right ) \textup{d}l \nonumber \\
&=& \frac{2s}{d-1} \int_0^{\infty} \left [ l^{d-1}
- \sum_{k=1}^{\frac{d-1}{2}} \left ( \frac{d-1}{2} - k \right )
l^{d-3} + O\left( l^{d-5} \right ) \right ]
\log \left ( 1 - e^{-\beta l / r} \right ) \textup{d}l \nonumber \\
&=& \frac{2s}{d-1} \int_0^{\infty} \left [ l^{d-1}
- \frac{1}{8} (d - 3) (d - 1) l^{d-3} + O\left( l^{d-5} \right ) \right ]
\log \left ( 1 - e^{-\beta l / r} \right ) \textup{d}l \nonumber \\
&\approx& \frac{2sr^d}{\beta^d} \zeta(d+1) -
\frac{s(d-3)^2 r^{d-2}}{\beta^{d-2}} \zeta(d-1) \nonumber
\end{eqnarray}
Doing this again for fermions gives (\ref{main-result3}) and it is not hard
to calculate the extensive and subextensive energies from this:
\begin{eqnarray}
E_{\textup{E}} = 2dA T^{d+1} \nonumber \\
E_{\textup{C}} = -2C^{\prime}T^{d-1} .\nonumber
\end{eqnarray}
When $d \neq 3$, $C^{\prime}$ is negative. Comparing this to the entropy, which is
\begin{equation}
S = (d + 1) A T^d \nonumber
\end{equation}
to leading order, we see that the $a$ and $b$ constants in
(\ref{casimir-relations}) are given by:
\begin{eqnarray}
\frac{a}{4\pi r} &=& \frac{2A}{\left ((d+1)A\right )^{1+\frac{1}{d}}} \nonumber \\
\frac{b}{2\pi r} &=& \frac{-2C^{\prime}}
{\left ((d+1)A\right )^{1-\frac{1}{d}}} .\label{a-and-b}
\end{eqnarray}
We have reproduced the result of \cite{kutasov} where in $d = 3$,
$E_{\textup{C}} \propto 1$ and the general form of (\ref{original-cardy-verlinde})
fails to hold. In all other dimensions (\ref{original-cardy-verlinde}) holds
with $a$ and $b$ given as in (\ref{a-and-b}). This is not what is usually called
the Cardy-Verlinde formula because the holographic theories studied by Verlinde
satisfy $\sqrt{ab} = d$ \cite{verlinde2}. The overall coefficient in free
theories however is
\begin{eqnarray}
\sqrt{ab} &=& \frac{4 \sqrt{2} \pi r}{d + 1} \sqrt{-\frac{C^{\prime}}{A}}
\nonumber \\
&=& 2 \pi \frac{d - 3}{d + 1}
\sqrt{\frac{s \zeta(d-1) + s^{*} \zeta^{*}(d-1)}
{s \zeta(d+1) + s^{*} \zeta^{*}(d+1)}} \nonumber
\end{eqnarray}
which, as found by Kutasov and Larsen, depends on the matter content.

\section{Beyond the continuum approximation II}
Earlier we wrote the density of states as the inverse Laplace transform of
$e^{f(i\beta)}$. We now wish to incorporate the subextensive contributions to the
partition function that we have found into the density of states.
Our expression for $f(i\beta)$ changes to
$f(i\beta) = i\beta E + \frac{A}{(i\beta)^d} + \frac{B}{(i\beta)^{d-1}}$ or
$f(i\beta) = i\beta E + \frac{A}{(i\beta)^d} + \frac{C}{(i\beta)^{d-2}}$
depending on whether we are describing a manifold with a boundary or a
sphere.\footnote{Or a sphere with conformal coupling where $C$ changes to
$C^{\prime}$.}
Previously we had $i\beta_0 = \left ( \frac{dA}{E} \right )^{\frac{1}{d+1}}$
as our saddle point but this is no longer true when discussing subextensive
corrections. Solving $f^{\prime}(i\beta_0) = 0$ algebraically for a general $d$
is not possible in either of the above cases. Instead, one can use Newton's
method to get a sense of how the saddle point shifts when going beyond the
continuum approximation. We know that
\begin{equation}
i\beta_{0, 0} = \left ( \frac{dA}{E} \right )^{\frac{1}{d+1}} \nonumber
\end{equation}
is the value of $i\beta_0$ when finite volume corrections are negligible. If these
contributions are not negligible but small, the corrected value of $i\beta_0$
should be close to this. Using $i\beta_{0, 0}$ as a starting point,
\begin{equation}
i\beta_{0, n+1} = i\beta_{0, n} - \frac{f^{\prime}(i\beta_{0, n})}
{f^{\prime\prime}(i\beta_{0, n})} \nonumber
\end{equation}
iterates Newton's method. The result of doing this once is
\begin{equation}
i\beta_0 = \left ( \frac{dA}{E} \right )^{\frac{1}{d+1}} \left [ 1 +
\left ( d + \frac{d+1}{d-1} \frac{(dA)^{\frac{d}{d+1}}}{B} E^{\frac{1}{d+1}}
\right )^{-1} \right ] ,\nonumber
\end{equation}
when the manifold has a boundary and
\begin{equation}
i\beta_0 = \left ( \frac{dA}{E} \right )^{\frac{1}{d+1}} \left [ 1 +
\left ( d-1 + \frac{d+1}{d-2} \frac{(dA)^{\frac{d-1}{d+1}}}{C} E^{\frac{2}{d+1}}
\right )^{-1} \right ] ,\nonumber
\end{equation}
when it is a sphere. These can be inserted into the saddle point formula
\begin{equation}
\rho(E) \sim \frac{1}{\sqrt{2\pi f^{\prime \prime}(i\beta_0)}} e^{f(i\beta_0)}
\nonumber
\end{equation}
which we will not do since it leads to a rather long and
unimaginative expression. We note that the exact saddle point can be found
without much work for minimally coupled fields on $\mathbb{S}^3$. The equation
that must be solved is
\begin{equation}
E\beta_0^4 + C\beta_0^2 -3A = 0 ,\nonumber
\end{equation}
which is a quartic over a quadratic, leading to
\begin{eqnarray}
\rho(E) \sim \frac{1}{\sqrt{2\pi}} \left( \frac{C+\sqrt{C^2+12AE}}{2E} \right)^{3/4}
\sqrt{\frac{24AE}{C+\sqrt{C^2+12AE}} - 2C} \nonumber \\
e^{\sqrt{\frac{C+\sqrt{C^2+12AE}}{2E}}
\left [ E + \frac{4AE^2}{(C+\sqrt{C^2+12AE})^2} + \frac{2CE}{C+\sqrt{C^2+12AE}}
\right ] } .\nonumber
\end{eqnarray}

\section{Beyond the saddle point approximation}
Saddle point methods are often used to approximate integrals that have no closed
form expression. However, it is worth noting that if we allow our result to be a
Taylor series, the inverse Laplace transforms for the partition functions we have
found \textit{can} be evaluated exactly. In its most general form, before rotating
the axis, the inverse Laplace transform is
\begin{equation}
\mathcal{L}^{-1}[Z](E) = \frac{1}{2 \pi i} \lim_{T \rightarrow \infty} 
\int_{\epsilon - iT}^{\epsilon + iT} e^{\beta E} Z(\beta)
\textup{d}\beta \nonumber .\\
\end{equation}
For the moment we will consider the extensive partition function
$Z = e^{A / \beta^d}$. The following manipulations rely on a few straightforward
identities involving hyperbolic functions.
\begin{eqnarray}
\rho(E) &=& \frac{1}{2 \pi i} \lim_{T \rightarrow \infty} 
\int_{\epsilon - iT}^{\epsilon + iT} e^{\beta E} e^{\frac{A}{\beta^d}} 
\textup{d}\beta \nonumber \\
&=& \frac{1}{2 \pi i} \lim_{T \rightarrow \infty} 
\int_{\epsilon - iT}^{\epsilon + iT} e^{\beta E} 
\left [ 1 + \frac{2}{\coth\left( \frac{A}{2\beta^d} \right) - 1} \right ] 
\textup{d}\beta \nonumber \\
&=& \frac{1}{2\pi} \int_{-\infty}^{\infty} e^{i \beta E} \textup{d}\beta + 
\frac{1}{\pi i} \lim_{T \rightarrow \infty} 
\int_{\epsilon - iT}^{\epsilon + iT} e^{\beta E} 
\frac{1}{\coth\left( \frac{A}{2\beta^d} \right) - 1} \textup{d}\beta 
\nonumber \\
&=& \delta(E) + \frac{1}{\pi i} \lim_{T \rightarrow \infty} 
\int_{\epsilon - iT}^{\epsilon + iT} e^{\beta E} 
\frac{\tanh\left(\frac{A}{2\beta^d}\right)}
{1 - \tanh\left(\frac{A}{2\beta^d}\right)} \textup{d}\beta \nonumber \\
&=& \delta(E) + \frac{1}{\pi i} \lim_{T \rightarrow \infty} 
\int_{\epsilon - iT}^{\epsilon + iT} e^{\beta E} 
\frac{\tanh\left(\frac{A}{2\beta^d}\right) + \tanh^2
\left(\frac{A}{2\beta^d}\right)}{1 - \tanh^2\left(\frac{A}{2\beta^d}\right)} 
\textup{d}\beta \nonumber \\
&=& \delta(E) + \frac{1}{\pi i} \lim_{T \rightarrow \infty} 
\int_{\epsilon - iT}^{\epsilon + iT} e^{\beta E} 
\left [ \frac{1}{2} \sinh\left(\frac{A}{2\beta^d}\right) + 
\sinh^2\left(\frac{A}{2\beta^d}\right) \right ] \textup{d}\beta \nonumber
\end{eqnarray}
To proceed further, we will Taylor expand $e^{\beta E}$. The interesting
part of the density of states (the part
without the delta function) now splits into two pieces:
\begin{equation}
\sum_{k = 0}^{\infty} \frac{E^k}{k!} \lim_{T \rightarrow \infty} 
\left [ \frac{1}{2 \pi i} \int_{\epsilon - iT}^{\epsilon + iT} 
\beta^k \sinh \left(\frac{A}{\beta^d}\right) \textup{d}\beta + 
\frac{1}{\pi i} \int_{\epsilon - iT}^{\epsilon + iT} \beta^k 
\sinh^2 \left(\frac{A}{2\beta^d}\right) \textup{d}\beta \right ] 
.\label{two-integrals}
\end{equation}
To evaluate the first integral, we will make the substitution
$\alpha = \frac{1}{\beta}$:
\begin{equation}
\frac{1}{2 \pi i} \lim_{T \rightarrow \infty} 
\int_{\frac{\epsilon - iT}{\epsilon^2 + T^2}}^
{\frac{\epsilon + iT}{\epsilon^2 + T^2}} 
\frac{\sinh \left ( A \alpha^d \right )}{\alpha^{k + 2}} 
\textup{d}\alpha .\nonumber
\end{equation}
The integrand has a pole at $\alpha = 0$ and the contour over which we are
integrating is a vertical line, slightly to the right of this pole, with a
vanishingly small height.
\begin{figure}[H]
\centering
\subfloat[][Closed contour]
{\includegraphics[width=0.45\textwidth]{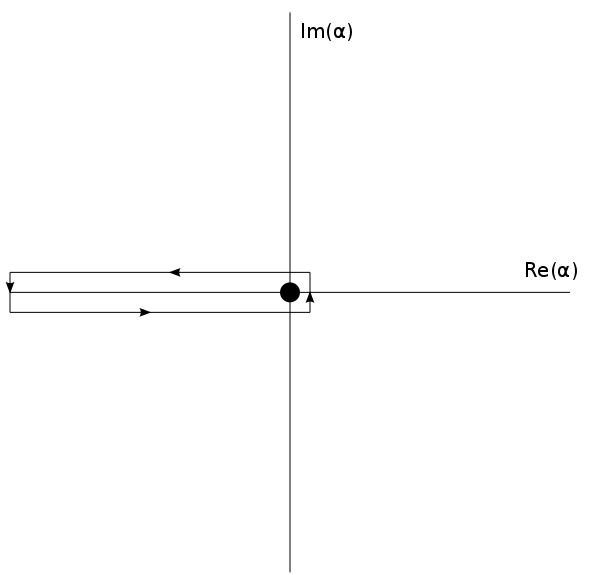}}
\subfloat[][Open contour]
{\includegraphics[width=0.45\textwidth]{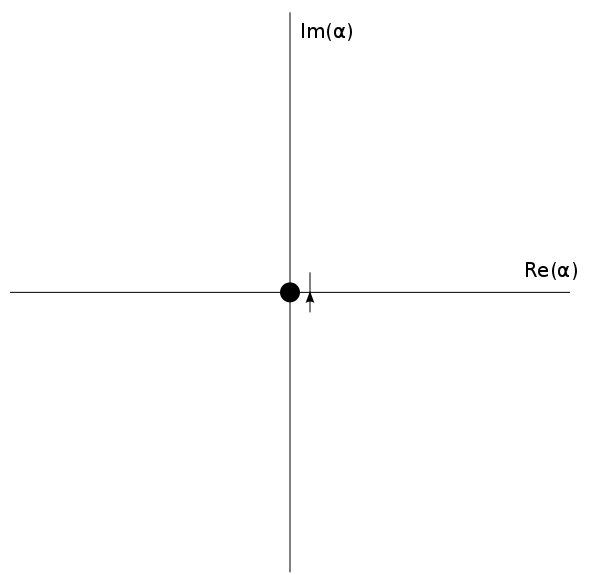}}
\caption{The integration to consider in the complex plane.}
\end{figure}
The figure on the right shows the desired contour, but fortunately,
the integral over the contour shown on the left is equal. The leftward
and rightward pieces of the contour cancel out while the downward piece
is at infinity and does not contribute. By the residue theorem,
\begin{eqnarray}
\frac{1}{2 \pi i} \lim_{T \rightarrow \infty} 
\int_{\frac{\epsilon - iT}{\epsilon^2 + T^2}}^{\frac{\epsilon + iT}
{\epsilon^2 + T^2}} \frac{\sinh \left ( A \alpha^d \right )}
{\alpha^{k + 2}} \textup{d}\alpha &=& \textup{Res} 
\left( \frac{\sinh \left ( A \alpha^d \right )}{\alpha^{k + 2}} ; 0 \right ) 
\nonumber \\
&=& \frac{1}{(k - d + 1)!} \frac{\textup{d}^{k - d + 1}}
{\textup{d}\alpha^{k - d + 1}} \left. \frac{\sinh \left ( A \alpha^d \right )}
{\alpha^d} \right|_{\alpha = 0} \nonumber \\
&=& \frac{1}{(k - d + 1)!} \frac{\textup{d}^{k - d + 1}}
{\textup{d}\alpha^{k - d + 1}} \sum_{n = 0}^{\infty} 
\left. \frac{A^{2n + 1} \alpha^{2dn}}{(2n + 1)!} \right|_{\alpha = 0} 
\nonumber \\
&=& \begin{cases} \frac{A^{\frac{k + 1}{d}}}{\left ( \frac{k + 1}{d} \right )!}
 & \frac{k + 1}{d} \; \mathrm{odd} \\ 0 & \mathrm{otherwise} \end{cases} 
 .\nonumber
\end{eqnarray}
The second integral in (\ref{two-integrals}) can be computed using the same
substitution and the same contour. All that changes is the parity of the
residue:
\begin{equation}
\frac{1}{\pi i} \lim_{T \rightarrow \infty} \int_{\frac{\epsilon - iT}
{\epsilon^2 + T^2}}^{\frac{\epsilon + iT}{\epsilon^2 + T^2}} 
\frac{\sinh^2 \left ( \frac{A}{2} \alpha^d \right )}{\alpha^{k + 2}} 
\textup{d}\alpha = \begin{cases} \frac{A^{\frac{k + 1}{d}}}
{\left ( \frac{k + 1}{d} \right )!} & \frac{k + 1}{d} \; \mathrm{even} \\ 
0 & \mathrm{otherwise} \end{cases} .\nonumber
\end{equation}
We can now substitute both of these back into (\ref{two-integrals}):
\begin{eqnarray}
\rho(E) &=& \delta(E) + \sum_{\frac{k + 1}{d} \; \mathrm{odd}} 
\frac{E^k}{k!} \frac{A^{\frac{k + 1}{d}}}{\left ( \frac{k + 1}{d} \right )!} 
+ \sum_{\frac{k + 1}{d} \; \mathrm{even}} \frac{E^k}{k!} 
\frac{A^{\frac{k + 1}{d}}}{\left ( \frac{k + 1}{d} \right )!} \nonumber \\
&=& \delta(E) + \sum_{\frac{k + 1}{d} = 1}^{\infty} \frac{E^k}{k!} 
\frac{A^{\frac{k + 1}{d}}}{\left ( \frac{k + 1}{d} \right )!} \nonumber \\
&=& \delta(E) + \sum_{j = 1}^{\infty} \frac{A^j E^{dj - 1}}{j! (dj - 1)!} 
.\label{taylor-series}
\end{eqnarray}
It is interesting to note that when $d = 1$, this is the Taylor series for a
modified Bessel function. This makes (\ref{taylor-series}) reproduce the
density of states found by Loran, Sheikh-Jabbari and Vincon \cite{loran}.
For other values of $d$ the function represented by (\ref{taylor-series}) is not
as ubiquitous.

Even though the Taylor series includes infinitely many orders beyond the
saddle point, the more useful formula for a density of states is the asymptotic
series. The saddle point method yields the first two terms of the asymptotic
series for $\log \rho(E)$. There are perhaps a number of ways to derive additional
terms but we will do so making direct use of the Taylor series.

First, define the function $f(x) = \sum_{m = 1}^{\infty} b_m x^m$ where
$b_m = \frac{1}{m! (dm - 1)!}$. An asymptotic series for $f(x)$ is as good as
the asymptotic series for $\rho(E)$ because $\rho(E) = \frac{1}{E} f(AE^d)$.
We wish to find a differential equation satisfied by $f$.

The recurrence relation satisfied by the $b_m$ is 
$m (dm - 1)(dm - 2) \dots (dm - d)b_m = b_{m - 1}$.
Multiplying both sides by $x^m$ and summing,
\begin{equation}
\sum_{m = 1}^{\infty} \frac{1}{d}(md) \dots 
(md -d)b_m x^m - b_m x^{m + 1} = 0 .\nonumber
\end{equation}
If $x^m$ were $x^{dm}$, we could differentiate it $d + 1$ times to turn
it into $(md) \dots (md -d) x^{dm - d - 1}$ and then multiply by $x^{d + 1}$ 
to get it into the form above. This suggests that our differential equation is
\begin{eqnarray}
\frac{1}{d} y^{d + 1} \frac{\textup{d}^{d + 1}}{\textup{d}y^{d + 1}} 
\left. f(y^d) \right|_{y = x^{\frac{1}{d}}} - xf(x) = 0 \nonumber \\
\frac{1}{d} x^{\frac{d}{d + 1}} \frac{\textup{d}^{d + 1}}
{\left(\textup{d} x^{\frac{1}{d}}\right)^{d + 1}} f(x) - xf(x) = 0 
.\label{diffeq}
\end{eqnarray}
For a fixed $d$, this equation can be expressed in a more explicit form. 
We write the first three below.
\begin{align}
\begin{tabular}{c|l}
$d$ & \\
\hline
$1$ & $x^2 \frac{\textup{d}^2}{\textup{d}x^2} f(x) - xf(x) = 0$ \\
$2$ & $4x^3 \frac{\textup{d}^3}{\textup{d}x^3} f(x) +6x^2 
\frac{\textup{d}^2}{\textup{d}x^2} f(x) - xf(x) = 0$ \\
$3$ & $27x^4 \frac{\textup{d}^4}{\textup{d}x^4} f(x) + 108x^3 
\frac{\textup{d}^3}{\textup{d}x^3} f(x) + 60x^2 
\frac{\textup{d}^2}{\textup{d}x^2} f(x) - xf(x) = 0$
\end{tabular}
\end{align}
Solutions to an ODE can be analyzed using the method of dominant balance.
This method, which is powerful enough to find arbitrarily many terms in the
asymptotic series, is used in the appendix to find three terms.
The end result is
\begin{equation}
f(x) \sim C_d x^{\frac{1}{2(d + 1)}} 
\exp \left ( \frac{d+1}{d^{\frac{d}{d+1}}} x^{\frac{1}{d + 1}} -
\frac{(d+2)(2d+1)}{24(d+1)} d^{-\frac{1}{d+1}} x^{-\frac{1}{d+1}} \right )
.\label{f-series}
\end{equation}
This becomes
\begin{equation}
\rho(E) \sim C_d A^{\frac{1}{2(d+1)}} E^{\frac{-d-2}{2(d+1)}}
\exp \left ( \frac{d+1}{d^{\frac{d}{d+1}}} A^{\frac{1}{d+1}} E^{\frac{d}{d+1}}
- \frac{(d+2)(2d+1)}{24(d+1)} d^{-\frac{1}{d+1}} A^{-\frac{1}{d+1}}
E^{-\frac{d}{d+1}} \right ) \label{rho-series}
\end{equation}
once we plug in $x = AE^d$ and divide by $E$.
Here, $C_d$ is a constant depending only on the dimension. This must be present
because the method of dominant balance in the appendix does not specifically
solve for the asymptotic series of $f$. It solves for the asymptotic series of
anything that solves (\ref{diffeq}) which is linear. The easiest way to specify
$C_d$ is to simply demand that the first two terms of (\ref{rho-series}) agree
with the saddle point approximation. In this case,
$C_d = \frac{1}{\sqrt{2\pi}} \sqrt{\frac{d^{\frac{1}{d+1}}}{d+1}}$ and we have
derived the last of our main results, (\ref{main-result4}).

Our result (\ref{rho-series}) finds a correction term beyond the saddle point
approximation. However, to derive it, we used the continuum approximation as our
starting point. We assumed that the free energy was extensive or that the
corrections to Weyl's formula were negligible. Is the correction in the
exponent that we have found
($A^{-\frac{1}{d+1}} E^{-\frac{d}{d+1}} \propto (|M|E^d)^{-\frac{1}{d+1}}$)
consistent with such an approximation? This depends on whether the manifold has
a boundary.

The extensive free energy $F_{\textup{E}}$ is proportional to $|M|T^{d+1}$.
On a manifold with a boundary, the subextensive free energy $F_{\textup{C}}$
is proportional to $|\partial M|T^d$. We want
\begin{equation}
|F_{\textup{E}}| \gg |F_{\textup{C}}| \label{condition1}
\end{equation}
to be satisfied. The reliability of our correction term requires that
\begin{equation}
|M|E^d \not \gg 1 \label{condition2}
\end{equation}
is satisfied as well. Condition (\ref{condition1}) states that
$T \gg |\partial M| / |M|$. Since $E \propto |M|T^{d+1}$, this means
$E \gg |\partial M|^{d+1} / |M|^d$. Consistency with condition
(\ref{condition2}) states that
$|M|^{-\frac{1}{d}} \gg |\partial M|^{d+1} / |M|^d$ or
$|M|^{\frac{d-1}{d}} \gg |\partial M|$ which is satisfied for some curved shapes.
Things are different when we consider the subextensive free energy on a sphere.
In this case condition (\ref{condition1}) gives $Er \gg 1$ while condition
(\ref{condition2}) gives $Er \not \gg 1$.

To derive a consistent formula beyond the saddle point for the density of states
on a sphere, one must drop the saddle point approximation and the continuum
approximation at the same time. In other words, one must find an accurate
expression for the inverse Laplace transform,
\begin{equation}
\mathcal{L}^{-1}\left [ e^{A\beta^{-d}+C\beta^{-(d-2)}} \right ] (E) =
\left ( \mathcal{L}^{-1}\left [ e^{A\beta^{-d}} \right ] \star
\mathcal{L}^{-1}\left [ e^{C\beta^{-(d-2)}} \right ] \right ) (E) ,\nonumber
\end{equation}
where the $\star$ denotes convolution. We already know what the individual inverse
Laplace transforms are, so we need to take the convolution of two Taylor
series.\footnote{The convolution formula requires that both functions are known at
small and large arguments so taking the convolution of two asymptotic expressions
is not sufficient.} This can be done using the identity
$\int_0^E \xi^r (E-\xi)^s \textup{d}\xi = \frac{r!s!}{(r+s+1)!} E^{r+s+1}$ from
which
\begin{equation}
\rho(E) = \delta(E) + \sum_{j = 1}^{\infty} \frac{A^j E^{dj - 1}}{j! (dj - 1)!} +
\sum_{j = 1}^{\infty} \frac{C^j E^{(d-2)j - 1}}{j! ((d-2)j - 1)!} +
\sum_{j=1}^{\infty} \sum_{k=1}^{\infty} \frac{A^j C^k}{j!k!(dj + (d-2)k-1)!}
E^{dj + (d-2)k - 1} \label{double-taylor-series}
\end{equation}
follows. Unfortunately, the double sum in (\ref{double-taylor-series})
means that the procedure in the appendix cannot be repeated for it and finding the
corresponding asymptotic series is more difficult.

\begin{acknowledgments}
This work was supported in part by the Natural Sciences and Engineering Research
Council of Canada. It also benefited from helpful comments given by Mark Van
Raamsdonk, Michael McDermott and Jared Stang.
\end{acknowledgments}

\appendix
\section{The method of dominant balance}
Our goal here is to show how the asymptotic (\ref{f-series}) follows from the
differential equation (\ref{diffeq}):
\begin{eqnarray}
\frac{1}{d} x^{\frac{d}{d + 1}} \frac{\textup{d}^{d + 1}}
{\left(\textup{d} x^{\frac{1}{d}}\right)^{d + 1}} f(x) - xf(x) = 0 \nonumber \\
x^2 \frac{\textup{d}^2}{\textup{d}x^2} f(x) - xf(x) = 0 \nonumber \\
4x^3 \frac{\textup{d}^3}{\textup{d}x^3} f(x) +6x^2 
\frac{\textup{d}^2}{\textup{d}x^2} f(x) - xf(x) = 0 \nonumber \\
27x^4 \frac{\textup{d}^4}{\textup{d}x^4} f(x) + 108x^3 
\frac{\textup{d}^3}{\textup{d}x^3} f(x) + 60x^2 
\frac{\textup{d}^2}{\textup{d}x^2} f(x) - xf(x) = 0 \nonumber \\
\dots \nonumber
\end{eqnarray}
It can readily be seen that when a power of $x$ appears beside a
derivative, the exponent is equal to the order of the derivative.
\begin{fact}
The coefficient on the highest order term in (\ref{diffeq}) is $d^d$.
\end{fact}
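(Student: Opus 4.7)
The plan is to change variables to $y = x^{1/d}$ (so that $x = y^d$) and re-express the high-order operator $\frac{d^{d+1}}{dy^{d+1}}$ as a polynomial in $\frac{d}{dx}$ with $y$-dependent coefficients, isolating only the term of highest derivative order. By the chain rule, $\frac{d}{dy} = d\,y^{d-1}\,\frac{d}{dx}$, so I would set $D := d\,y^{d-1}\,\frac{d}{dx}$ and analyze the leading piece of $D^{d+1}$.

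When $D^{d+1}$ is expanded by pushing all the $\frac{d}{dx}$'s past the $y^{d-1}$ prefactors, each resulting summand has the form (function of $y$)$\,\cdot\, \frac{d^k}{dx^k}$ for some $1 \le k \le d+1$. The maximum order $k = d+1$ is reached in exactly one way: every $\frac{d}{dx}$ in each of the $d+1$ copies of $D$ must pass through to $f$ rather than be consumed by differentiating one of the preceding $y^{d-1}$ factors. This single contribution is $(d\,y^{d-1})^{d+1}\,\frac{d^{d+1}}{dx^{d+1}} = d^{d+1}\,y^{d^2-1}\,\frac{d^{d+1}}{dx^{d+1}}$. Multiplying by the overall prefactor $\frac{1}{d}\,y^{d+1}$ in (\ref{diffeq}) and using $y^{d^2+d} = (y^d)^{d+1} = x^{d+1}$ converts this into $d^{d}\,x^{d+1}\,\frac{d^{d+1}}{dx^{d+1}}$, so the top-order coefficient is $d^d$, matching the values $1,\,4,\,27$ tabulated for $d = 1, 2, 3$.

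The only point that needs care is the uniqueness claim: no lower-order summand in the expansion of $D^{d+1}$ can contribute to $\frac{d^{d+1}}{dx^{d+1}}$. This is immediate because whenever a $\frac{d}{dx}$ inside some $D$ acts on a $y^{d-1}$ instead of on $f$, that derivative is spent, the order of the remaining differential operator drops by one, and no subsequent step can raise it back. Thus the piece identified above is the entire coefficient of $\frac{d^{d+1}}{dx^{d+1}}$. No real obstacle arises; the fact is essentially a one-line chain-rule computation together with this combinatorial uniqueness remark, and could alternatively be proved by induction on $d$ using the observation that the three displayed ODEs extrapolate consistently.
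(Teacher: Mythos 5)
Your proof is correct and follows essentially the same route as the paper: both arguments observe via the chain rule that each differentiation of $f$ produces a factor of $d\,y^{d-1}$, so the top-order piece is $d^{d+1}y^{d^2-1}f^{(d+1)}(y^d)$, which the prefactor $\tfrac{1}{d}y^{d+1}$ turns into $d^d x^{d+1}f^{(d+1)}(x)$. Your explicit operator framing and the uniqueness remark are just a slightly more careful packaging of the same computation.
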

\begin{proof}
This is clear because $\frac{\textup{d}^{d + 1}}{\textup{d}y^{d + 1}} 
f(y^d) = \frac{\textup{d}^d}{\textup{d}y^d} \left [ d y^{d - 1} 
f^{\prime}(y^d) \right ]$. Every time one differentiates $f$, one gets another
factor of $dy^{d - 1}$. Doing this $d$ more times yields
$d^{d + 1} y^{d^2 - 1} f^{(d + 1)}(y^d)$. Letting $y = x^{\frac{1}{d}}$
and multiplying by $\frac{1}{d} x^{\frac{d + 1}{d}}$,
we get $d^d x^{d + 1} f^{(d + 1)}(x)$.
\end{proof}
We will now set $f(x) = e^{S_0(x)}$ and observe that the highest order
derivative gives us the term
$d^d x^{d + 1} \left ( \frac{\textup{d}S_0}{\textup{d}x} \right )^{d + 1}$
after we cancel $e^{S_0(x)}$. The other terms will all be of the form
$\left ( \frac{\textup{d}S_0}{\textup{d}x} \right )^{n_1} 
\dots \left ( \frac{\textup{d}^jS_0}{\textup{d}x^j} \right )^{n_j} x^j$
where $n_1 + 2n_2 + \dots + jn_j = j$ and $j \leq d + 1$.
\begin{fact}
Choosing $S_0(x) \propto x^{\frac{1}{d + 1}}$ causes all such
$\left ( \frac{\textup{d}S_0}{\textup{d}x} \right )^{n_1} \dots 
\left ( \frac{\textup{d}^jS_0}{\textup{d}x^j} \right )^{n_j} x^j$
to be $o \left( x^{d + 1} \left ( \frac{\textup{d}S_0}{\textup{d}x} 
\right )^{d + 1} \right )$.
\end{fact}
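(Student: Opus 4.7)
The plan is to reduce the claim to a single inequality about $N := n_1 + n_2 + \cdots + n_j$ and then verify that inequality combinatorially. The key observation is that once $S_0$ has a clean power-law scaling, every admissible product of derivatives of $S_0$ collapses onto a one-parameter family of exponents in $x$ indexed by $N$.

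First I would substitute $S_0(x) = c\, x^{1/(d+1)}$ into the generic term. Since $\frac{d^k S_0}{dx^k} \propto x^{1/(d+1) - k}$ for each $k \geq 1$, the exponent of $x$ in
\begin{equation}
x^j \prod_{k=1}^{j} \left( \frac{d^k S_0}{dx^k} \right)^{n_k} \nonumber
\end{equation}
is $j + N/(d+1) - \sum_k k n_k$, which by the given constraint $\sum_k k n_k = j$ collapses to $N/(d+1)$. In particular, the leading term $x^{d+1}(dS_0/dx)^{d+1}$, corresponding to $j = d+1$ with $n_1 = d+1$ and all other $n_k = 0$, scales as $x^1$.

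Next I would show that every other admissible tuple has $N \leq d$. The chain
\begin{equation}
N = \sum_k n_k \;\leq\; \sum_k k\, n_k = j \;\leq\; d+1 \nonumber
\end{equation}
is saturated on both sides only when $n_k = 0$ for all $k \geq 2$ (forcing $n_1 = j$) and simultaneously $j = d+1$, which is precisely the leading tuple. For any other admissible tuple at least one inequality is strict, so $N \leq d$ and the associated product scales as $x^{N/(d+1)} \leq x^{d/(d+1)}$, which is $o(x) = o\bigl(x^{d+1}(dS_0/dx)^{d+1}\bigr)$ as $x \to \infty$, giving the claim.

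The main (and essentially only) obstacle is conceptual: one must check that every term actually produced by applying the ODE to $e^{S_0}$ really does fit into the generic form asserted in the preamble. This is guaranteed by Fa\`a di Bruno's formula, which expresses $e^{-S_0} \frac{d^j}{dx^j} e^{S_0}$ as a polynomial in $S_0', \ldots, S_0^{(j)}$ whose monomials all satisfy $\sum_k k n_k = j$. Combined with the observation already noted in the paper that the coefficient of $\frac{d^j f}{dx^j}$ in the ODE carries exactly the factor $x^j$, this confirms that the generic form is exhaustive and completes the verification.
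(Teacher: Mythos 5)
Your proof is correct and follows essentially the same route as the paper: substitute the power law, observe that the constraint $\sum_k k n_k = j$ collapses the exponent of $x$ to $N/(d+1)$ with $N = \sum_k n_k$, and show this is maximized only by the tuple $j = n_1 = d+1$. Your bookkeeping via the chain $N \leq \sum_k k n_k = j \leq d+1$ is a tidier and more airtight version of the paper's "only subtract the $n_1$ part" maximization argument, and the Fa\`a di Bruno remark is a harmless bonus, but the substance is identical.
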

\begin{proof}
Substituting our guess for $S_0(x)$ into the term above, we find that 
it is proportional to
$\left ( \frac{1}{x^{\frac{d}{d + 1}}} \right )^{n_1} \dots 
\left ( \frac{1}{x^{\frac{d}{d + 1} + j - 1}} \right )^{n_j} x^j = 
x^{j - n_1 \left( \frac{d}{d + 1} \right) - n_2 
\left( \frac{d}{d + 1} + 1 \right) - \dots - n_j 
\left( \frac{d}{d + 1} + j - 1\right)}$.
We need to see if this is $o(x)$. Clearly it is because to maximize
$j - n_1 \left( \frac{d}{d + 1} \right) - n_2 
\left( \frac{d}{d + 1} + 1 \right) - \dots - 
n_j \left( \frac{d}{d + 1} + j - 1\right)$,
we only want to be subtracting the $n_1$ part, so we let all other $n_i = 0$.
We now want to minimize $j \left ( 1 - \frac{d}{d+1} \right )$ and the way to
do this is to let $j = d + 1$, giving us $x$. Anything less
will give us something $o(x)$.
\end{proof}
With the knowledge that we can make all terms but
$d^d x^{d + 1} \left ( \frac{\textup{d}S_0}{\textup{d}x} \right )^{d + 1}$
negligible, our differential equation becomes:
\begin{eqnarray}
d^d x^{d + 1} \left ( \frac{\textup{d}S_0}{\textup{d}x} \right )^{d + 1} = x 
\nonumber \\
\frac{\textup{d}S_0}{\textup{d}x} = \frac{1}{d^{\frac{d}{d + 1}}} 
x^{\frac{1}{d + 1} - 1} \nonumber \\
S_0(x) = \frac{d + 1}{d^{\frac{d}{d + 1}}} x^{\frac{1}{d + 1}} 
.\label{s0-solution}
\end{eqnarray}
The solution to the differential equation when we drop all but one term, is
precisely the solution that we have shown in Fact 2 to be consistent with the
dropping of the terms in the first place. Thus, to zeroth order, the
asymptotic behaviour of $f(x)$ must be
$\log{f(x)} \sim \left(\frac{(d+1)^{d + 1}}{d^d}\right)^{\frac{1}{d + 1}} 
x^{\frac{1}{d + 1}}$.
With a little more work, we can figure out what it is to first order.
\begin{fact}
The coefficient on the second highest order derivative in (\ref{diffeq})
is $d^d \frac{d^2 - 1}{2}$.
\end{fact}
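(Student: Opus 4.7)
The plan is to mimic the proof of Fact 1 but retain the subleading term in the operator expansion. Writing $g(y) = y^d$, I would apply Fa\`a di Bruno's formula,
\begin{equation}
\frac{\textup{d}^{d+1}}{\textup{d}y^{d+1}} f(g(y)) = \sum_{k=1}^{d+1} B_{d+1,k}\bigl(g',g'',\dots\bigr) f^{(k)}(g(y)) ,\nonumber
\end{equation}
where $B_{d+1,k}$ is the partial Bell polynomial. Fact 1 corresponds to $k=d+1$: the only partition of $\{1,\dots,d+1\}$ with $d+1$ blocks is all singletons, and the leading coefficient $d^{d+1} y^{d^2-1}$ was already extracted there. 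For Fact 3 I need the $k = d$ piece.

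The constraints $j_1 + j_2 + \cdots = d$ and $j_1 + 2j_2 + \cdots = d+1$ indexing $B_{d+1,d}$ force $j_1 = d-1$, $j_2 = 1$, and $j_i = 0$ for $i \geq 3$. So only partitions of $\{1,\dots,d+1\}$ consisting of one block of size two and $d-1$ singletons contribute, and there are exactly $\binom{d+1}{2}$ of them. Each yields a factor $(g'(y))^{d-1} g''(y) = (dy^{d-1})^{d-1} d(d-1) y^{d-2}$, so the coefficient of $f^{(d)}(y^d)$ in $\frac{\textup{d}^{d+1}}{\textup{d}y^{d+1}} f(y^d)$ is
\begin{equation}
\binom{d+1}{2} (dy^{d-1})^{d-1} d(d-1) y^{d-2} = d^{d+1} \, \frac{d^2-1}{2} \, y^{d^2-d-1} .\nonumber
\end{equation}

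Multiplying by the $\frac{1}{d} y^{d+1}$ prefactor from (\ref{diffeq}) combines the $y$-exponents to $d^2$, i.e.\ $x^d$ after $y^d = x$. The coefficient on $x^d f^{(d)}(x)$ is therefore $d^d \cdot \tfrac{d^2-1}{2}$, as claimed. As a sanity check against the explicit ODEs at the top of the appendix, the formula gives $6$ for $d=2$ and $108$ for $d=3$, matching the $6x^2 f''$ and $108 x^3 f'''$ terms respectively. The only mildly subtle step is the observation that partitions involving any $g^{(j)}$ with $j \geq 3$ necessarily have fewer than $d$ blocks, so they only contribute to $f^{(k)}$ with $k < d$ and cannot leak into the $k=d$ coefficient; once that remark is made, everything else is bookkeeping.
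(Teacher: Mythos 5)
Your computation is correct and arrives at the same intermediate identity the paper uses, namely that the $f^{(d)}(y^d)$ part of $\frac{\textup{d}^{d+1}}{\textup{d}y^{d+1}} f(y^d)$ is $d^{d+1}\frac{d^2-1}{2}\,y^{d^2-d-1}f^{(d)}(y^d)$, but you get there by a genuinely different route. The paper peels off one derivative to write $\frac{\textup{d}^d}{\textup{d}y^d}\left[ d y^{d-1} f^{\prime}(y^d)\right]$ and then tracks by hand the single stage $s$ at which the power of $y$ (rather than $f$) gets differentiated, summing $\sum_{s=1}^{d}s$ over the possible stages; you instead invoke Fa\`a di Bruno and observe that the $k=d$ partial Bell polynomial is supported on the unique profile $j_1=d-1$, $j_2=1$, contributing $\binom{d+1}{2}(g^{\prime})^{d-1}g^{\prime\prime}$ --- the same count, $\binom{d+1}{2}=\sum_{s=1}^{d}s$, packaged combinatorially. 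Your closing remark (that any block of size $\geq 3$ forces fewer than $d$ blocks, so nothing leaks into the $f^{(d)}$ coefficient) is the right thing to say and is implicit but unstated in the paper. What your approach buys is systematics: the paper's Fact 7, which needs the coefficient of $f^{(d-1)}$, degenerates into a rather painful double sum over the two stages at which the $y$-powers are hit, whereas the Bell-polynomial bookkeeping would handle $k=d-1$ (profiles $j_1=d-3,j_2=2$ and $j_1=d-2,j_3=1$) just as uniformly. What the paper's approach buys is that it is elementary and self-contained, requiring no external identity. Your sanity check against the displayed $d=2$ and $d=3$ ODEs is a nice touch.
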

\begin{proof}
Again, starting with
$\frac{\textup{d}^{d + 1}}{\textup{d}y^{d + 1}} f(y^d) = 
\frac{\textup{d}^d}{\textup{d}y^d} \left [ d y^{d - 1} 
f^{\prime}(y^d) \right ]$,
we have to differentiate $y^{d - 1}$ once and differentiate $f$ every other
time in order to get an $f^{(d)}(y^d)$. If we wait until we have
$\frac{\textup{d}}{\textup{d}y} \left [ d^d y^{d(d - 1)} 
f^{(d)}(y^d) \right ]$, the last derivative gives us
$d^d d (d - 1) y^{d(d - 1) - 1} f^{(d)}(y^d)$. If we were to differentiate
the factor of $y$ earlier when it had $y^{r(d - 1)}$, we would get
$d^d r (d - 1) y^{d(d - 1) - 1} f^{(d)}(y^d)$. Adding these up, we get
$d^d (d -1) \left ( \sum_{s = 1}^d s \right ) y^{d^2 - d - 1} 
f^{(d)}(y^d) = d^{d + 1} \frac{d^2 - 1}{2} y^{d^2 - d - 1} f^{(d)}(y^d)$.
Using $y = x^{\frac{1}{d}}$ and postmultiplying by
$\frac{1}{d} x^{\frac{d + 1}{d}}$ again, this becomes
$d^d \frac{d^2 - 1}{2} x^d f^{(d)}(x)$.
\end{proof}
Since we have a solution for $S_0(x)$, we will now let
$f(x) = e^{S_0(x) + S_1(x)}$. When this is inserted into
(\ref{diffeq}), the highest power of $x$ multiplying a single derivative of
$S_1(x)$ will come from the first term in the expansion of
\begin{equation}
d^d x^{d + 1} \left ( \frac{1}{d + 1} \frac{d + 1}{d^{\frac{d}{d + 1}}} 
x^{-\frac{d}{d + 1}} + \frac{\textup{d}S_1}{\textup{d}x} \right )^{d + 1} 
\nonumber
\end{equation}
and it will be $d \left ( \frac{d}{d + 1} \right )^{d - 1} 
\left ( \frac{(d + 1)^{d + 1}}{d^d} \right )^{\frac{d}{d + 1}} 
x^{d + 1 - \frac{d^2}{d + 1}} \left ( \frac{\textup{d}S_1}{\textup{d}x} 
\right )$.
Also, by looking at the zeroth term in the expansion, we see that an $x$ exactly
cancels the $-x$ present in the definition of (\ref{diffeq}).
\begin{fact}
Choosing $S_1(x) \propto \log{x}$ makes
$x^{d + 1 - \frac{d^2}{d + 1}} 
\left ( \frac{\textup{d}S_1}{\textup{d}x} \right ) = x^{\frac{d}{d + 1}}$
the dominant term.
\end{fact}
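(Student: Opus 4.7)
The plan is to verify this Fact in two parts: first the displayed algebraic identity, then the dominance assertion.

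Part one is a direct substitution. With $S_1(x)=c\log x$ I have $\textup{d}S_1/\textup{d}x = c/x$ and therefore
\[
x^{d+1-\frac{d^2}{d+1}}\,\frac{\textup{d}S_1}{\textup{d}x} \;=\; c\,x^{d - \frac{d^2}{d+1}} \;=\; c\,x^{\frac{d}{d+1}},
\]
which is the claimed scaling. The value of $c$ plays no role in this exponent check; it will be fixed later by balancing the coefficient of $x^{d/(d+1)}$ against the purely $S_0$-dependent contributions at the same order.

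For part two I would enumerate every term that depends on $S_1$ after $f=e^{S_0+S_1}$ is inserted into (\ref{diffeq}) and $f$ is divided out. Because $f^{(n)}/f$ is a Bell polynomial in $(S_0+S_1)', (S_0+S_1)'', \dots$, each such term is a monomial in the derivatives $S_0^{(m)}$ and $S_1^{(m)}$ multiplied by one of the ODE prefactors $x^n$ with $n \le d+1$. A routine power-count using $S_0^{(m)} \propto x^{-d/(d+1)-(m-1)}$ (a direct consequence of Fact 2) and $S_1^{(m)} \propto x^{-m}$ then shows: (a) inside the top derivative, replacing an $S_0'$ by an additional $S_1'$ costs $1/(d+1)$ in the exponent, so contributions with $k\ge 2$ copies of $S_1'$ drop to $x^{(d+1-k)/(d+1)}$; (b) introducing any $S_1^{(j)}$ with $j \ge 2$ costs an extra $x^{-(j-1)}$; and (c) lowering the ODE derivative order by one costs another $1/(d+1)$. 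Hence the only $S_1$-contribution reaching the order $x^{d/(d+1)}$ is the one in the Fact.

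The main obstacle is purely combinatorial: carrying out the Bell-polynomial expansion of $f^{(n)}/f$ at arbitrary $d$ without losing numerical prefactors. This is nothing more than careful bookkeeping, since every derivative that appears is a pure power of $x$ and every exponent reduces to elementary arithmetic in units of $1/(d+1)$. Once dominance is in hand, the next step of the larger program (and the real payoff of this Fact) is to assemble the three $x^{d/(d+1)}$ contributions into a single linear equation for $c$: the Fact's term, the Bell-polynomial correction $\binom{d+1}{2}(S_0')^{d-1}S_0''\,x^{d+1}$ from the top derivative, and the leading $(S_0')^d x^d$ piece from the second-highest derivative whose coefficient $d^d(d^2-1)/2$ is supplied by Fact 3. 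Solving this linear equation produces the $x^{1/(2(d+1))}$ prefactor appearing in (\ref{f-series}).
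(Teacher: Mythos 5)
Your proposal is correct and takes essentially the same route as the paper: expand $f^{(n)}/f$ into Bell monomials, power-count them using $S_0^{(m)}\propto x^{\frac{1}{d+1}-m}$ and $S_1^{(m)}\propto x^{-m}$, and conclude that after the $x^{1}$ cancellation the only $S_1$-dependent term surviving at order $x^{\frac{d}{d+1}}$ is $x^{d+1}\left(S_0'\right)^{d}S_1'$, alongside the two pure-$S_0$ contributions at that order which you correctly identify for the later determination of the coefficient. Your rule (b) is stated imprecisely --- since $S_1\propto\log x$, one factor of $S_1^{(j)}$ scales exactly like $j$ factors of $S_1'$ (both give $x^{-j}$), and the clean uniform statement is the paper's, namely that every monomial scales as $x^{k/(d+1)}$ with $k$ the number of $S_0$-derivative factors --- but this only overstates the suppression of the discarded terms and does not affect the conclusion.
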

\begin{proof}
Recall that when $S_0(x)$ was the exponent, we had terms that looked like
$\left ( \frac{\textup{d}S_0}{\textup{d}x} \right )^{n_1} 
\dots \left ( \frac{\textup{d}^jS_0}{\textup{d}x^j} \right )^{n_j} x^j$
where $n_1 + 2n_2 + \dots + jn_j = j$ and $j \leq d + 1$. This time, we replace
$S_0(x)$ with $x^{\frac{1}{d + 1}} + S_1(x)$ so this product looks like
$\left ( x^{\frac{1}{d + 1} - 1} + 
\frac{\textup{d}S_1}{\textup{d}x} \right )^{n_1} \dots 
\left ( x^{\frac{1}{d + 1} - j} + \frac{\textup{d}^jS_1}{\textup{d}x^j} 
\right )^{n_j} x^j$.
Terms in the expansion of this product look like:
\begin{eqnarray}
&& x^{k_1 \left ( \frac{1}{d + 1} - 1\right )} 
\left ( \frac{\textup{d}S_1}{\textup{d}x} \right )^{n_1 - k_1} 
\dots x^{k_j \left ( \frac{1}{d + 1} - j\right )} 
\left ( \frac{\textup{d}^jS_1}{\textup{d}x^j} \right )^{n_j - k_j} x^j 
\nonumber \\
&\propto& x^j x^{\frac{k_1 + \dots + k_j}{d + 1} - k_1 - 2k_2 - 
\dots - jk_j - (n_1 - k_1) - 2(n_2 - k_2) - \dots -j(n_j - k_j)} \nonumber \\
&=& x^{\frac{k_1 + \dots +k_j}{d + 1}} .\nonumber
\end{eqnarray}
To maximize the exponent, we must turn it into
$\frac{n_1 + \dots + n_j}{d + 1}$ and the way to maximize this is to
let $n_1 = j = d + 1$. However, we have aleady mentioned that the $x$
term should vanish so the next highest power of $x$ we can get is
$x^{\frac{d}{d + 1}}$. We realize this either with $n_1 = j = d$ or with
$j = d + 1$ and $k_1 = n_1 - 1$. Everything else is
$o \left ( x^{\frac{d}{d + 1}} \right )$.
\end{proof}
This allows us to drop most of the terms in (\ref{diffeq}) just like before.
Specifically, we will drop all powers of $x$, lower than $x^{\frac{d}{d + 1}}$
and find that this is consistent with having $S_1(x) \propto \log{x}$. The next
two facts are devoted to solving for the coefficients that appear beside
$x^{\frac{d}{d + 1}}$.
\begin{fact}
$\frac{\textup{d}^n}{\textup{d}x^n} e^{rx^m}$ can be written as
$x^{-n} e^{rx^m} \sum_{j = 0}^n a_j (rx^m)^j$.
\end{fact}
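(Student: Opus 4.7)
The plan is to prove Fact 5 by induction on $n$. The base case $n = 0$ is immediate, since $e^{rx^m} = x^0 e^{rx^m} \cdot 1$, corresponding to $a_0 = 1$ and no other coefficients.

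For the inductive step, assume the claim at level $n$, so that
\begin{equation}
\frac{\textup{d}^n}{\textup{d}x^n} e^{rx^m} = x^{-n} e^{rx^m} \sum_{j=0}^n a_j^{(n)} u^j ,\nonumber
\end{equation}
where I write $u = rx^m$ for brevity and note that $\textup{d}u/\textup{d}x = mu/x$. Applying the product rule to the right-hand side produces three contributions: differentiating $x^{-n}$ gives a factor of $-n/x$, differentiating $e^u$ produces $mu/x \cdot e^u$, and differentiating $u^j$ inside the sum produces $jm u^{j-1}/x$. Each of these gains exactly one extra factor of $x^{-1}$, so the overall prefactor becomes $x^{-(n+1)} e^u$ as required. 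Collecting powers of $u$ yields the recurrence
\begin{equation}
a_j^{(n+1)} = (mj - n)\, a_j^{(n)} + m\, a_{j-1}^{(n)} ,\nonumber
\end{equation}
with the conventions $a_{-1}^{(n)} = 0$ and $a_{n+1}^{(n)} = 0$. In particular, $a_{n+1}^{(n+1)} = m\, a_n^{(n)}$ is nonzero in general but no higher power of $u$ can appear, so the new sum indeed runs from $j = 0$ to $j = n+1$, matching the claimed form at level $n+1$.

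There is no serious obstacle here: the only subtlety is verifying that no power higher than $u^{n+1}$ can be generated, which is apparent because $e^u$ is differentiated at most once per step and each step raises the degree of the polynomial in $u$ by at most one. The explicit recurrence for the $a_j^{(n)}$ is not needed in the sequel; only the structural statement that the derivative has the claimed form is used in the subsequent method-of-dominant-balance argument to determine the subleading coefficient in the asymptotic expansion of $f(x)$.
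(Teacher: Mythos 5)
Your induction is correct and establishes the same structural claim, but by a genuinely different route from the paper. The paper argues by directly classifying the terms of the $n$-th derivative: writing it as $\frac{\textup{d}^{n-1}}{\textup{d}x^{n-1}}\left(mrx^{m-1}e^{rx^m}\right)$, it notes that any term in which the exponential has been differentiated $k$ times and the accumulated powers of $x$ the remaining times is proportional to $r^k x^{k(m-1)-(n-k)} = x^{-n}(rx^m)^k$, so every term has the claimed shape. Your route instead sets $u=rx^m$, uses $\textup{d}u/\textup{d}x = mu/x$, and shows the ansatz is closed under one more differentiation, which yields the explicit recurrence $a_j^{(n+1)}=(mj-n)a_j^{(n)}+ma_{j-1}^{(n)}$. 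The induction buys rigor plus a recurrence for the $a_j$ that the paper never writes down (it later extracts the needed $a_j$ by expanding the derivative by hand before Fact 10); the paper's counting argument buys a template that transfers directly to the sibling coefficient computations in Facts 3, 6 and 7. One slip to fix: differentiating $u^j$ gives $ju^{j-1}\cdot mu/x = jmu^{j}/x$, not $jmu^{j-1}/x$ --- the degree in $u$ is preserved, not lowered. The recurrence you then state is the one that follows from the correct computation (only the $e^u$ factor raises the degree, so the top coefficient is $a_{n+1}^{(n+1)}=ma_n^{(n)}$ as you say), so the conclusion stands, but the sentence as written contradicts the recurrence it is meant to justify.
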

\begin{proof}
We are dealing with
$\frac{\textup{d}^{n - 1}}{\textup{d}x^{n - 1}} 
\left ( mrx^{m - 1} e^{rx^m} \right )$. If we
differentiate the exponent $k$ times, we get $m^k r^k x^{k(m - 1)} e^{rx^m}$.
If we then differentiate the power of $x$ $n - k$ times, our result is
proportional to the exponential times
$r^k x^{k(m - 1) - (n - k)} = x^{-n} (rx^m)^k$.
\end{proof}
In the situations where we need to apply this fact, $m = \frac{1}{d + 1}$ and
$r = \frac{d + 1}{d^{\frac{d}{d + 1}}}$. One way to get an
$x^{\frac{d}{d + 1}}$ term is to let $n = j = d$. In this case, the coefficient
we find is
$r^d m^d = \left ( \frac{(d + 1)^{d + 1}}{d^d} \right )^{\frac{d}{d + 1}} 
\frac{1}{(d + 1)^d}$. We must remember that this is multiplied by the
coefficient on the $d^{\mathrm{th}}$ derivative found in Fact 3.
\begin{fact}
The other route to an $x^{\frac{d}{d + 1}}$ term - choosing $n = d + 1$ and
$j = d$ - yields a coefficient of
$\frac{-d^2}{2} \frac{1}{d^{\frac{d^2}{d + 1}}}$.
\end{fact}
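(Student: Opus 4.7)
The plan is to extract the subleading coefficient $a_{d+1,d}$ appearing in the polynomial of Fact 5 at the $(rx^m)^d$ level of the $(d+1)$-th derivative, then multiply by $r^d$ to read off the $x^{d/(d+1)}$ prefactor, exactly parallel to the $n=j=d$ computation done just above Fact 6. The overall $d^d$ from Fact 1 is then understood to be factored off, matching the convention adopted in the preceding paragraph of the appendix.

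First I would derive a recursion for the polynomials $P_n(u) = x^n e^{-rx^m} \tfrac{\textup{d}^n}{\textup{d}x^n} e^{rx^m}$ with $u = rx^m$. Differentiating once more and using $\textup{d}u/\textup{d}x = mu/x$ gives
\begin{equation}
P_{n+1}(u) = (mu - n) P_n(u) + mu\, P_n'(u) . \nonumber
\end{equation}
Reading off coefficients of $u^k$ yields $a_{n+1,k} = m\, a_{n,k-1} + (mk-n)\, a_{n,k}$, so the leading term reproduces $a_{n,n} = m^n$ (which is the input for the $n=j=d$ case already handled), and the subleading term satisfies $a_{n+1,n} = m\, a_{n,n-1} + n(m-1) m^n$ with initial datum $a_{1,0} = 0$ since $P_1(u) = mu$.

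Next I would solve this telescoping recursion. Dividing by $m^{n+1}$ gives $a_{n+1,n}/m^{n+1} - a_{n,n-1}/m^n = n(m-1)/m$, and summing from $1$ to $n-1$ produces
\begin{equation}
a_{n,n-1} = \binom{n}{2} m^{n-1}(m-1) . \nonumber
\end{equation}
As a sanity check, the recursion itself yields $P_2(u) = m^2 u^2 + m(m-1) u$ and $P_3(u) = m^3 u^3 + 3 m^2(m-1) u^2 + m(m-1)(m-2) u$, both of which display the claimed subleading coefficient.

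Finally I would specialize to $n = d+1$, $m = 1/(d+1)$, and $m-1 = -d/(d+1)$, collapsing the formula to $a_{d+1,d} = -d^2/[2(d+1)^d]$. Multiplying by $r^d = (d+1)^d / d^{d^2/(d+1)}$, the $(d+1)^d$ factors cancel and the coefficient of $x^{d/(d+1)}$ becomes $-d^2/[2\, d^{d^2/(d+1)}]$, which is the statement of Fact 6. The main obstacle is bookkeeping rather than analysis: a purely combinatorial count would have to separately add the ``differentiate the exponent $d+1$ times and then contract one factor of $u$'' route to the ``differentiate the exponent $d$ times and kill one existing $x^{m-1}$'' route, and verify that they combine with the correct sign. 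The recursion above bundles both routes automatically, and the minus sign in the final expression emerges cleanly from $m-1 < 0$.
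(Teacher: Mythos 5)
Your proposal is correct: the recursion $P_{n+1}(u) = (mu-n)P_n(u) + mu\,P_n'(u)$ is right, the coefficient recursion $a_{n+1,k} = m\,a_{n,k-1} + (mk-n)\,a_{n,k}$ follows, and the telescoped answer $a_{n,n-1} = \binom{n}{2}m^{n-1}(m-1)$ specializes at $n=d+1$, $m=\frac{1}{d+1}$ to $a_{d+1,d} = -\frac{d^2}{2}\frac{1}{(d+1)^d}$, which after multiplication by $r^d = (d+1)^d d^{-d^2/(d+1)}$ gives exactly the claimed $-\frac{d^2}{2}d^{-d^2/(d+1)}$. You have also correctly identified the convention that the $d^d$ from Fact 1 is kept separate and reattached later. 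The paper's own proof reaches the same intermediate quantity by a direct enumeration: it differentiates the exponential at all but one of the $d$ remaining steps, and at the one exceptional step differentiates the accumulated power $x^{s(m-1)}$, picking up $s(m-1)$; summing over $s=1,\dots,d$ gives $(m-1)m^d r^d\sum_{s=1}^d s$, which is your $\binom{d+1}{2}m^d(m-1)r^d$. Your inhomogeneous term $n(m-1)m^n$ in the recursion is precisely the contribution of ``differentiate the prefactor at step $n+1$,'' so the two arguments are the same count organized differently. What your version buys is systematics: the full recursion for $a_{n,k}$ determines every subleading coefficient at once (and would streamline the later, messier extraction of the $a_j$ with $j=q-2$ needed for the six-term computation before equation (A13)), whereas the paper's path-counting must be redone by hand at each depth; the paper's version is slightly more direct for this single coefficient.
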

\begin{proof}
We have
$\frac{\textup{d}^d}{\textup{d}x^d} \left ( mr x^{m - 1} e^{rx^m} \right )$
and we have to differentiate the exponential all but one time. Waiting until
$\frac{\textup{d}}{\textup{d}x} \left ( m^d r^d x^{d(m - 1)} e^{rx^m} \right)$,
this differentiates to $d (m - 1) m^d r^d x^{dm - d - 1} e^{rx^m}$. Had we done
this for $x^{s(m - 1)}$, we would have had $s (m - 1) m^d r^d x^{dm - d - 1}$
times the exponential. We can therefore add these up to get
$\frac{-d}{d + 1} \left ( \frac{1}{d + 1} \right )^d 
\left ( \frac{(d + 1)^{d + 1}}{d^d} \right )^{\frac{d}{d + 1}} 
\sum_{s = 1}^d s = \frac{-d^2}{2} \frac{1}{d^{\frac{d^2}{d + 1}}}$.
\end{proof}
We must also remember that this is multiplied by the coefficient on the
$(d + 1)^{\mathrm{st}}$ derivative found in Fact 1.
The following equation is the result of keeping only the terms found to be
significant in Fact 4 complete with prefactors. The left hand side uses the
highest power of $x$ multiplying a single derivative of $S_1(x)$ and the right
hand side uses the two coefficients found above corresponding to the two ways
of constructing $x^{\frac{d}{d + 1}}$.
\begin{eqnarray}
d \left ( \frac{d}{d + 1} \right )^{d - 1} \left ( \frac{(d + 1)^{d + 1}}{d^d} 
\right )^{\frac{d}{d + 1}} x^{1 + \frac{d}{d + 1}} \left ( 
\frac{\textup{d}S_1}{\textup{d}x} \right ) &=& \left [ d^d \frac{d^2}{2} 
\frac{1}{d^{\frac{d^2}{d + 1}}} - d^d \frac{d^2 - 1}{2} 
\frac{1}{d^{\frac{d^2}{d + 1}}} \right ] x^{\frac{d}{d + 1}} \nonumber \\
\frac{\textup{d}S_1}{\textup{d}x} &=& \frac{1}{x} \left [ \frac{d^2}{2(d + 1)}
 - \frac{d - 1}{2} \right ] \nonumber \\
	S_1(x) &=& \frac{1}{2(d + 1)} \log{x} \label{s1-solution}
\end{eqnarray}
If we stop here, we will have completed no more than a roundabout derivation
of the saddle point result. We will set $f(x) = e^{S_0(x) + S_1(x) + S_2(x)}$
and solve for $S_2(x)$.
\begin{fact}
The third coefficient in (\ref{diffeq}) is
$\frac{1}{24}d^{d-1}(d-2)(d-1)^2(d+1)(3d+1)$.
\end{fact}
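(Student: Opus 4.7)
The plan is to follow the template of Facts 1 and 3: isolate the coefficient of $f^{(d-1)}(y^d)$ in $\frac{\textup{d}^{d+1}}{\textup{d}y^{d+1}} f(y^d)$, then substitute $y = x^{1/d}$ and multiply by $\frac{1}{d}x^{(d+1)/d}$ to bring the result into the form used in (\ref{diffeq}). Since $d-1$ of the $d+1$ applications of $\textup{d}/\textup{d}y$ must act on $f$ in order to leave an $f^{(d-1)}$, the remaining two differentiations fall on accumulated polynomial factors of $y$.

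Via Fa\`a di Bruno applied to $g(y)=y^d$, terms containing $f^{(d-1)}(y^d)$ correspond to nonnegative integer solutions of $\sum_j m_j = d-1$ and $\sum_j j\,m_j = d+1$; subtracting these gives $\sum_{j\geq 2}(j-1)m_j = 2$, so the only surviving partitions are $(m_1,m_2)=(d-3,2)$ and $(m_1,m_3)=(d-2,1)$. Conceptually, the first corresponds to the two spare differentiations hitting two distinct $dy^{d-1}$ factors (each reduced once), and the second to both hitting the same factor (reduced twice).

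Using $g^{(j)}(y)=\tfrac{d!}{(d-j)!}y^{d-j}$ and the Fa\`a di Bruno multinomial $\tfrac{(d+1)!}{\prod m_j!}\prod_j (g^{(j)}(y)/j!)^{m_j}$, I would evaluate each partition, verify that both produce the same $y$-power $y^{d^2-2d-1}$ (as required for the eventual $x^{d-1}$ coefficient), and add. Routine algebra reduces the two contributions to $\tfrac{d^d(d+1)(d-1)^3(d-2)}{8}$ and $\tfrac{d^d(d+1)(d-1)^2(d-2)}{6}$ respectively; factoring yields $\tfrac{d^d(d+1)(d-1)^2(d-2)}{24}\bigl[3(d-1)+4\bigr]=\tfrac{d^d(d+1)(d-1)^2(d-2)(3d+1)}{24}$. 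Substituting $y=x^{1/d}$ and multiplying by $\tfrac{1}{d}x^{(d+1)/d}$ turns $y^{d^2-2d-1}$ into $\tfrac{1}{d}x^{d-1}$ and absorbs one factor of $d$, leaving exactly the claimed $\tfrac{1}{24}d^{d-1}(d-2)(d-1)^2(d+1)(3d+1)$ as the coefficient of $x^{d-1}f^{(d-1)}(x)$.

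The only obstacle is combinatorial bookkeeping: one must confirm that no other partitions contribute (immediate from the $\sum_{j\geq 2}(j-1)m_j = 2$ constraint) and apply the multinomial prefactors without error. Two sanity checks fall out cheaply. At $d=3$ the formula gives $\tfrac{1}{24}\cdot 9\cdot 1\cdot 4\cdot 4\cdot 10=60$, matching the $60\,x^2 f''(x)$ term in the displayed $d=3$ case of (\ref{diffeq}); at $d=2$ the $(d-2)$ factor annihilates the coefficient, consistent with the absence of any $f'(x)$ term in the $d=2$ equation.
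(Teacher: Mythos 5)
Your proof is correct, and it reaches the stated coefficient by a genuinely different route from the paper. The paper's proof is an iterative bookkeeping argument: starting from $\frac{\textup{d}^{d+1}}{\textup{d}y^{d+1}} f(y^d) = \frac{\textup{d}^d}{\textup{d}y^d}\left[ d y^{d-1} f^{\prime}(y^d)\right]$, it tracks the two differentiations that must land on the accumulated powers of $y$ rather than on $f$, summing over the stages $s \leq t$ at which they occur; this produces the double sum $d^{d-1}\sum_{s=1}^{d-1} s(d-1) \sum_{t=s}^{d-1}[t(d-1)-1]$, whose closed form is then asserted. You instead invoke Fa\`a di Bruno for $g(y)=y^d$ and reduce the problem to the constraint $\sum_{j\geq 2}(j-1)m_j = 2$, which isolates exactly the two partitions $(m_1,m_2)=(d-3,2)$ and $(m_1,m_3)=(d-2,1)$; I have checked that your two contributions $\tfrac{d^d(d+1)(d-1)^3(d-2)}{8}$ and $\tfrac{d^d(d+1)(d-1)^2(d-2)}{6}$ are right, that both carry the power $y^{d^2-2d-1}=y^{(d-1)^2-2}$, and that their sum matches the paper's intermediate expression (both give $180$ at $d=3$ and $12480$ at $d=4$ before the final factor of $\tfrac{1}{d}$). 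What your route buys is a structural explanation of \emph{why} only two types of term appear and a clean closed form for each, sidestepping the double-sum evaluation that is the most error-prone and least documented step of the paper's argument; what the paper's route buys is uniformity with Facts 1 and 3 (and with the later Facts 6 and 8), which all use the same elementary ``when does the derivative hit the power of $y$'' bookkeeping without quoting Fa\`a di Bruno. Your two sanity checks ($60$ at $d=3$ against the displayed equation, and the vanishing at $d=2$ via the $(d-2)$ factor, consistent with $g'''=0$ there) are exactly the right ones.
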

\begin{proof}
$\frac{\textup{d}^{d + 1}}{\textup{d}y^{d + 1}} f(y^d) = 
\frac{\textup{d}^d}{\textup{d}y^d} \left [ d y^{d - 1} 
f^{\prime}(y^d) \right ]$ We have to differentiate $f$ all but two times.
If $y$ has the exponent $s(d-1)$ when we differentiate it the first time, we will
bring down a coefficient of $s(d-1)$. The second time, $y$ can have an exponent
of $t(d-1)-1$ where $s \leq t \leq d-1$. Therefore, we get
$d^{d-1} \sum_{s=1}^{d-1} s(d-1) \sum_{t=s}^{d-1}[t(d-1)-1] y^{(d-1)^2-2}
f^{(d-1)}(y^d) = \frac{1}{24}d^d(d-2)(d-1)^2(d+1)(3d+1) y^{(d-1)^2-2}
f^{(d-1)}(y^d)$. We must substitute $y = x^{\frac{1}{d}}$ and postmultiply by
$\frac{1}{d} x^{\frac{d + 1}{d}}$ to get
$\frac{1}{24}d^{d-1}(d-2)(d-1)^2(d+1)(3d+1) x^{d-1} f^{(d-1)}(x)$.
\end{proof}
Just like before, we want to look for the highest power of $x$ multiplying
a single derivative of $S_2(x)$. This comes from the expansion of
\begin{equation}
d^dx^{d+1} \left ( \frac{1}{d + 1} \frac{d + 1}{d^{\frac{d}{d + 1}}} 
x^{-\frac{d}{d + 1}} + \frac{1}{2(d+1)x} + \frac{\textup{d}S_2}{\textup{d}x}
\right )^{d+1} \nonumber
\end{equation}
where the first two terms in the brackets are the derivatives of $S_0(x)$ and
$S_1(x)$. The term that will become important is
$(d+1)d^{\frac{d}{d+1}} x^{\frac{2d+1}{d+1}} \frac{\textup{d}S_2}{\textup{d}x}$.
\begin{fact}
$\frac{\textup{d}^n}{\textup{d}x^n} \left ( x^p e^{rx^m} \right )$ can be written
as $x^{-n} x^p e^{rx^m} \sum_{q=0}^n \binom{n}{q}p\dots(p-n+q+1)
\sum_{j=0}^q a_j (rx^m)^j$.
\end{fact}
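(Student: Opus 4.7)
The plan is to derive this from the general Leibniz rule together with Fact 5, which has already handled the pure exponential factor. Concretely, I would write
\begin{equation}
\frac{\textup{d}^n}{\textup{d}x^n}\left(x^p e^{rx^m}\right) = \sum_{q=0}^n \binom{n}{q} \frac{\textup{d}^{n-q}}{\textup{d}x^{n-q}}(x^p) \cdot \frac{\textup{d}^{q}}{\textup{d}x^{q}}\left(e^{rx^m}\right), \nonumber
\end{equation}
which is just the $n$-fold product rule. The first factor is a pure power, so $\frac{\textup{d}^{n-q}}{\textup{d}x^{n-q}}(x^p) = p(p-1)\cdots(p-n+q+1)\, x^{p-n+q}$; this is precisely the falling-factorial string $p\cdots(p-n+q+1)$ that appears in the statement of Fact 7. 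The second factor is handled by Fact 5, which gives $\frac{\textup{d}^{q}}{\textup{d}x^{q}}(e^{rx^m}) = x^{-q} e^{rx^m} \sum_{j=0}^{q} a_j (rx^m)^j$ for appropriate coefficients $a_j$ that may in principle depend on $q$ but whose exact form is not needed here.

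Multiplying these two pieces and collecting powers of $x$, the $x^{p-n+q}$ from the polynomial factor combines with the $x^{-q}$ from the exponential factor to give $x^{p-n}$, which is independent of $q$. Pulling this common factor, together with $e^{rx^m}$, out of the sum over $q$ yields
\begin{equation}
\frac{\textup{d}^n}{\textup{d}x^n}\left(x^p e^{rx^m}\right) = x^{-n} x^p e^{rx^m} \sum_{q=0}^{n} \binom{n}{q} p(p-1)\cdots(p-n+q+1) \sum_{j=0}^{q} a_j (rx^m)^j, \nonumber
\end{equation}
which is exactly the form claimed. There is really no obstacle here beyond bookkeeping: one has to verify that the exponent shift $p-n+q$ from the polynomial cancels the $-q$ from Fact 5 so that the prefactor reduces cleanly to $x^{-n}x^p$, and one has to confirm that the falling factorial indexing matches the abbreviated notation "$p\cdots(p-n+q+1)$" in the statement. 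Both are immediate once the Leibniz expansion is written down.

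Note that this fact will later be used in exactly the same role Fact 5 played: to count how the differential operator acts on $e^{S_0(x)+S_1(x)}$ (where $S_1 \propto \log x$ contributes the $x^p$ factor and $S_0 \propto x^{1/(d+1)}$ contributes the $e^{rx^m}$ factor with $m = 1/(d+1)$), and then to isolate the terms that contribute at the order of $x^{d/(d+1)}$ relevant for determining $S_2(x)$.
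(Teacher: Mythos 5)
Your proof is correct and takes essentially the same route as the paper: both apply the Leibniz rule to distribute the $n$ derivatives between $x^p$ and $e^{rx^m}$, write the derivatives of $x^p$ as the falling factorial $p\cdots(p-n+q+1)\,x^{p-n+q}$, and invoke Fact 5 for the exponential factor, after which the $x^{p-n+q}\cdot x^{-q}=x^{p-n}$ cancellation gives the claimed prefactor. Your observation that the $a_j$ may depend on $q$ is a fair point of care that the paper glosses over, but it does not change the argument.
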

\begin{proof}
We can use the product rule to write our derivative as
$\sum_{q=0}^n \binom{q}{n} \frac{\textup{d}^q}{\textup{d}x^q} e^{rx^m}
\frac{\textup{d}^{n-q}}{\textup{d}x^{n-q}} x^p = \sum_{q=0}^n \binom{q}{n}
\frac{\textup{d}^q}{\textup{d}x^q} e^{rx^m} p\dots(p-n+q+1) x^{p-n+q}$ and then
apply Fact 5.
\end{proof}
Again, we plan on using this relation when $m = \frac{1}{d+1}$,
$r = \frac{d+1}{d^{\frac{d}{d+1}}}$ and $p = \frac{1}{2(d+1)}$.
When we were solving for $S_1(x)$ we saw that the highest powers of $x$ appearing
in the differential equation ($x$ itself) cancelled. Now that we are solving for
$S_2(x)$, we can show that the next highest power of $x$ that could potentially
appear on its own will cancel as well.
\begin{fact}
The coefficient on $x^{\frac{d}{d+1}}$ vanishes.
\end{fact}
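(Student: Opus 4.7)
My plan is to verify the cancellation by direct bookkeeping. With $f(x) = e^{S_0(x)+S_1(x)} = x^p e^{rx^m}$ for $p = \frac{1}{2(d+1)}$, $m = \frac{1}{d+1}$, $r = \frac{d+1}{d^{d/(d+1)}}$, Fact 8 gives an explicit double-sum decomposition of each derivative term $x^n f^{(n)}/f$ into pieces indexed by $q$ (the number of derivatives landing on the exponential) and $j$ (the power of $rx^m$). The strategy is to apply Fact 8 term by term in the differential equation (\ref{diffeq}) and collect every route that produces the monomial $(rx^m)^d$, which is the only way to manufacture $x^{\frac{d}{d+1}}$.

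The inner sum in Fact 8 runs only up to $j = q$, so a $(rx^m)^d$ contribution requires $q \geq d$ and hence $n \geq d$. Since (\ref{diffeq}) contains only derivatives up to $f^{(d+1)}$, the sole contributing terms are $x^{d+1} f^{(d+1)}$ and $x^d f^{(d)}$, with prefactors $\alpha_{d+1} = d^d$ and $\alpha_d = d^d(d^2-1)/2$ supplied by Facts 1 and 3 respectively. The $-xf$ term manifestly produces only the $(rx^m)^0$ piece and is irrelevant here, while any lower-$n$ terms in (\ref{diffeq}) are excluded by the $j \leq q$ constraint.

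For $n = d$ only the choice $q = j = d$ survives; the $p$-product $p(p-1)\cdots(p-n+q+1)$ collapses to the empty product and leaves the Fact 5 coefficient $a_d^{(d)} = d^{-d^2/(d+1)}$ already identified in the text just before Fact 6. For $n = d+1$ there are exactly two survivors: $(q,j) = (d+1,d)$ gives $a_d^{(d+1)} = -\frac{d^2}{2}\,d^{-d^2/(d+1)}$ from Fact 6, while $(q,j) = (d,d)$ contributes the factor $\binom{d+1}{d}\,p = \frac{1}{2}$ multiplying $a_d^{(d)}$. Assembling everything, the coefficient of $x^{d/(d+1)}$ should be
\begin{equation*}
\alpha_d\, a_d^{(d)} + \alpha_{d+1}\!\left(-\frac{d^2}{2} + \frac{1}{2}\right) a_d^{(d)} = \frac{d^d(d^2-1)}{2}\,d^{-d^2/(d+1)} - \frac{d^d(d^2-1)}{2}\,d^{-d^2/(d+1)} = 0,
\end{equation*}
as claimed. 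I do not expect any genuine obstacle: everything reduces to the algebraic formulae already recorded in Facts 1, 3, 5, 6 and 8. The only step that warrants care is the single binomial factor coming from the lone derivative that lands on the $x^p$ prefactor when $n = d+1$, so I would double-check that $\binom{d+1}{d}\,p = \frac{1}{2}$ is handled correctly and reconfirm via Fact 8 that no $q < d$ paths sneak in unnoticed.
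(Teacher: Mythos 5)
Your proposal is correct and follows the paper's own argument essentially verbatim: both identify the same three routes to $(rx^m)^d$, namely $(n=q=j=d)$, $(n=d+1,\,q=j=d)$ and $(n=q=j-0=d+1,\,j=d)$, and both reduce the cancellation to $\frac{d^2-1}{2}+\frac{1}{2}-\frac{d^2}{2}=0$ using the prefactors from Facts 1, 3, 5, 6 and 8. The only differences are cosmetic bookkeeping (you cite Fact 6 for the $(q,j)=(d+1,d)$ coefficient rather than rederiving it, and you make explicit the exclusion of the $-xf$ term and of derivatives with $n<d$), so no further comparison is needed.
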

\begin{proof}
By looking at Fact 8, we can see that we get $x^{\frac{d}{d+1}}$ when $j = d$.
This can be realized with $(n = q = j = d)$, $(n = d+1, q = j = d)$, or
$(n = q = d+1, j = d)$. Remembering the appropriate overall coefficients that
come with $n = d$ and $n = d + 1$, we can calculate the contribution due to each
case. The first contributes $d^d \frac{d^2-1}{2}(rm)^d$, the second
$d^d(d+1)p(rm)^d$ and the third $d^d(m-1)(rm)^d \sum_{s=1}^d s$. All in all,
we get $d^{\frac{d}{d+1}} \left [ \frac{d^2-1}{2}+\frac{1}{2}
-\frac{d}{d+1}\frac{d(d+1)}{2} \right ] = 0$.
\end{proof}
What this tells us is that the lone power of $x$ whose coefficient we need to
find is $x^{\frac{d-1}{d+1}}$.
\begin{fact}
Choosing $S_2(x) \propto x^{-\frac{1}{d+1}}$ makes
$x^{\frac{2d+1}{d+1}}\frac{\textup{d}S_2}{\textup{d}x}\propto x^{\frac{d-1}{d+1}}$
the dominant term.
\end{fact}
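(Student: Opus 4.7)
The plan is to extend the logic of Facts 2, 4 and 6 to the next order by substituting $f(x) = e^{S_0(x)+S_1(x)+S_2(x)}$ into (\ref{diffeq}) and checking that the ansatz $S_2(x) \propto x^{-1/(d+1)}$ both produces the advertised dominant power of $x$ and is self-consistent. After factoring out the exponential, every surviving monomial has the structure of Fact 2, namely $x^j \prod_{k \ge 1} (S^{(k)})^{n_k}$ with $\sum_k k n_k = j \le d+1$; each $S^{(k)}$ splits multinomially into $S_0^{(k)} + S_1^{(k)} + S_2^{(k)}$. Substituting the scalings $S_0^{(k)} \sim x^{1/(d+1)-k}$, $S_1^{(k)} \sim x^{-k}$, and (under the ansatz) $S_2^{(k)} \sim x^{-1/(d+1)-k}$ turns any term that selects $a_k$, $b_k$, $c_k$ factors from $S_0$, $S_1$, $S_2$ at each derivative order into an overall power $x^{(A-C)/(d+1)}$, where $A = \sum a_k$, $C = \sum c_k$, and where the explicit $j$ cancels against $\sum k (a_k+b_k+c_k) = j$. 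The ordering $A - C \in \{d+1, d, d-1, \dots\}$ reproduces the known hierarchy: the top value cancels against $-xf$, the next has vanishing coefficient by Fact 9, and the first nontrivial balance occurs at $A - C = d-1$, i.e.\ at $x^{(d-1)/(d+1)}$.

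Next I would enumerate the $(A, C)$ patterns that hit this target. A small case check using $A + B + C \le \sum n_k \le d+1$ shows that only $(A, C) = (d-1, 0)$ (with any admissible $B$) and $(A, C) = (d, 1)$ with $B = 0$ survive; any $C \ge 2$ forces $A + C \ge d+1$ with $B = 0$ and then $\sum k n_k > d+1$ once higher derivatives are present. For $(A, C) = (d, 1)$ the inequality $\sum k a_k \ge A$ forces $\sum k a_k = d$, so $a_1 = d$ and all higher $a_k$ vanish; similarly the single $S_2$ derivative must sit at order $\bar k = 1$. Hence the only $S_2$-dependent contribution at the target order is the single linear cross term from $d^d x^{d+1}(S_0' + S_1' + S_2')^{d+1}$, which equals $(d+1) d^{d/(d+1)} x^{(2d+1)/(d+1)} S_2'(x)$, exactly as named in the claim. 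Balancing it against the $S_2$-independent coefficient $K_d$ assembled from all $(A, C) = (d-1, 0)$ contributions yields a first-order linear equation for $S_2$ whose power-law solution is $S_2(x) \propto x^{-1/(d+1)}$, confirming both the dominance of the named term and the self-consistency of the ansatz.

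The main obstacle is the combinatorial bookkeeping rather than any single hard calculation. One must rule out every product with $C \ge 2$, every higher derivative of $S_2$, and every cross mixing with multiple $S_1^{(k)}$ at or above the target order; while the inequality $\sum k n_k \le d+1$ does most of the work, the case tree is tedious. One also has to assemble $K_d$ from the three known coefficients in Facts 1, 3, 6 together with the terms generated by the linear-in-$S_1'$ cross contractions of the $(S_0'+S_1')^{d+1}$ expansion, and confirm $K_d \ne 0$ so that $S_2$ is genuinely non-constant and contributes a real correction to the asymptotic series. Once the enumeration is in place, the resulting ODE integrates by inspection and closes the argument.
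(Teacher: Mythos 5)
Your proposal is correct and follows essentially the same route as the paper: substitute $f = e^{S_0+S_1+S_2}$, power-count each monomial $x^j\prod_k (S^{(k)})^{n_k}$ under the three-way split of $S$, observe that the exponent reduces to $\frac{A-C}{d+1}$ (identical to the paper's $\frac{2\sum k_i + \sum l_i - \sum n_i}{d+1}$), and use the cancellations at orders $x$ and $x^{d/(d+1)}$ to conclude that $x^{\frac{d-1}{d+1}}$ is the dominant surviving order, realized by the linear $S_2'$ cross term. Your additional enumeration showing that $(A,C)=(d,1)$ with $B=0$ is the \emph{only} $S_2$-dependent pattern at that order is a slightly more explicit version of what the paper defers to its six-case list following this fact, but it is the same argument.
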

\begin{proof}
We must consider $x^j$ multiplied by powers of the $j^{\mathrm{th}}$ derivative
or lower order derivatives of $S_0(x) + S_1(x) + S_2(x)$. This leads to
$\left ( x^{\frac{1}{d+1}-1} + \frac{1}{x} + 
\frac{\textup{d}S_2}{\textup{d}x} \right )^{n_1} \dots \left ( x^{\frac{1}{d+1}-j}
+ \frac{1}{x^j} + \frac{\textup{d}^jS_2}{\textup{d}x^j} \right )^{n_j} x^j$ where
$n_1 + 2n_2 + \dots + jn_j = j \leq d + 1$. Terms in this product take the form:
\begin{eqnarray}
&& x^{k_1\left( \frac{1}{d+1} -1 \right)} \left ( \frac{1}{x} \right )^{l_1}
\left ( \frac{\textup{d}S_2}{\textup{d}x} \right )^{n_1-k_1-l_1} \dots
x^{k_j\left( \frac{1}{d+1} -j \right)} \left ( \frac{1}{x^j} \right )^{l_j}
\left ( \frac{\textup{d}^jS_2}{\textup{d}x^j} \right )^{n_j-k_j-l_j} \nonumber \\
&\propto& x^jx^{\frac{k_1+\dots+k_j}{d+1}-k_1-2k_1-\dots-jk_j-l_1-2l_2-\dots-jl_j
-(n_1-k_1-l_1)\left( \frac{1}{d+1} + 1\right) - \dots -
-(n_j-k_j-l_j)\left( \frac{1}{d+1} + j\right)} \nonumber \\
&=& x^{2\frac{k_1+\dots+k_j}{d+1} + \frac{l_1+\dots+l_j}{d+1} -
\frac{n_1+\dots+n_j}{d+1}} \nonumber
\end{eqnarray}
If we wanted to maximize the exponent, we would turn the power of $x$ into
$x^{\frac{n_1+\dots+n_j}{d+1}}$ and turn it into $x$ by letting $j = d + 1$.
However, we already know that $x$ should cancel, so the next highest power of
$x$ we could make is $x^{\frac{d}{d+1}}$ either with $j = d + 1$ or $j = d$.
This cancels too so we maximize the exponent with $x^{\frac{d-1}{d+1}}$ which
can be made in a number of ways with $j = d + 1$, $j = d$ or $j = d - 1$.
\end{proof}
Solving for the coefficient in front of $x^{\frac{d-1}{d+1}}$ is the last task
that remains. Looking at Fact 8, we can see that we need $mj = \frac{d-1}{d+1}$
or $j = d - 1$. There are six ways to get this:
\begin{enumerate}
\item $n = d - 1, q = d - 1, j = d - 1$
\item $n = d, q = d - 1, j = d - 1$
\item $n = d, q = d, j = d - 1$
\item $n = d + 1, q = d - 1, j = d - 1$
\item $n = d + 1, q = d, j = d - 1$
\item $n = d + 1, q + d + 1, j = d - 1$
\end{enumerate}
From Facts 1, 3, and 7, we know the coefficients that come from the values
of $n$. We also know that the coefficient coming from $q$ is a binomial
coefficient from Fact 8. We need to find the values of $a_j$ where
$j$ differs from $q$ either by zero, one, or two. Expanding the derivative
and using the same logic we used in the proof of Fact 7,
\begin{eqnarray}
\frac{\textup{d}^q}{\textup{d}x^q} \left ( e^{rx^m} \right ) &=&
\frac{\textup{d}^{q-1}}{\textup{d}x^{q-1}} \left ( rx^{m-1} e^{rx^m} \right )
\nonumber \\
&=& (rmx^{m-1})^q e^{rx^m} + (m-1)(rm)^{q-1} \sum_{s=1}^{q-1}s x^{(q-1)(m-1)-1}
e^{rx^m} \nonumber \\
&& + (rm)^{q-2} \sum_{s=1}^{q-2} s(m-1) \sum_{t=s}^{q-2} [t(m-1)-1]
x^{(q-2)(m-1)-1} e^{rx^m} + \dots \nonumber \\
&=& (rmx^{m-1})^q e^{rx^m} + (m-1)(rm)^{q-1} \frac{q(q-1)}{2} \nonumber \\
&& + (rm)^{q-2} [\frac{1}{24} (m-1)^2 (q-1)(q-2)(3(q-1)^2+q-3)
- \frac{1}{6}(m-1)q(q-1)(q-2)] + \dots \nonumber
\end{eqnarray}
we can read off the relevant $a_j$ values. Constructing the six coefficients,
we get
\begin{enumerate}
\item $\frac{1}{24} d^{d-1} (d-1)^2(d-2)(d+1)(3d+1)(rm)^{d-1}
= \frac{1}{24} d^{\frac{d-1}{d+1}} (d-2)(d-1)^2(d+1)(3d+1)$
\item $d^d \frac{d^2-1}{2} dp (rm)^{d-1} = d^{\frac{3d+1}{d+1}} \frac{d-1}{4}$
\item $d^d \frac{d^2-1}{2} (m-1)(rm)^{d-1} \frac{d(d-1)}{2}
= -d^{\frac{4d+2}{d+1}} \frac{(d-1)^2}{4}$
\item $d^d \frac{d(d+1)}{2} p(p-1)(rm)^{d-1} =
-\frac{1}{8} d^{\frac{3d+1}{d+1}} \frac{2d+1}{d+1}$
\item $d^d (d+1)p(m-1)(rm)^{d-1} \frac{d(d-1)}{2} =
-\frac{1}{4}d^{\frac{4d+2}{d+1}} \frac{d-1}{d+1}$
\item $\frac{1}{24} d^d (rm)^{d-1} [(m-1)^2d(d-1)(3d^2+d-2) - 4(m-1)(d-1)d(d+1)]
= \frac{1}{24} d^{\frac{4d+2}{d+1}} (d-1) \frac{3d^2+2d+4}{d+1}$
\end{enumerate}
which must be added up to give
$-\frac{1}{24} d^{\frac{d-1}{d+1}} \frac{(d+2)(2d+1)}{d+1}$ as the coefficient
appearing beside $x^{\frac{d-1}{d+1}}$.
\begin{eqnarray}
(d+1)d^{\frac{d}{d+1}} x^{\frac{2d+1}{d+1}} \frac{\textup{d}S_2}{\textup{d}x}
&=& \frac{1}{24} d^{\frac{d-1}{d+1}} \frac{(d+2)(2d+1)}{d+1} x^{\frac{d-1}{d+1}}
\nonumber \\
\frac{\textup{d}S_2}{\textup{d}x} &=& \frac{1}{24} d^{-\frac{1}{d+1}}
\frac{(d+2)(2d+1)}{(d+1)^2} x^{\frac{-d-2}{d+1}} \nonumber \\
S_2(x) &=& \frac{1}{24}d^{-\frac{1}{d+1}} \frac{(d+2)(2d+1)}{d+1}
x^{-\frac{1}{d+1}} \label{s2-solution}
\end{eqnarray}
Putting (\ref{s0-solution}), (\ref{s1-solution}), and (\ref{s2-solution}) into
$f(x) = e^{S_0(x)+S_1(x)+S_2(x)}$, we get the asymptotic form:
\begin{equation}
f(x) \sim x^{\frac{1}{2(d + 1)}} 
\exp \left ( \frac{d+1}{d^{\frac{d}{d+1}}} x^{\frac{1}{d + 1}} -
\frac{(d+2)(2d+1)}{24(d+1)} d^{-\frac{1}{d+1}} x^{-\frac{1}{d+1}} \right ) .
\end{equation}

\bibliographystyle{unsrt}
\bibliography{references}
\end{document}